\documentclass[
  reprint,
  superscriptaddress,
  amsmath,
  amssymb,
  aps,
  nofootinbib
]{revtex4-2}

\usepackage{graphicx}% Include figure files
\usepackage{dcolumn}% Align table columns on decimal point
\usepackage{bm}% bold math
\usepackage{qcircuit}
\usepackage{physics}
\usepackage{mathtools}
\usepackage{xcolor}
\usepackage{booktabs}
\usepackage{multirow}
\usepackage[version=4]{mhchem}
\usepackage{amsthm}
\usepackage{hyperref}
\usepackage{cleveref}
\usepackage{braket}
% for algorithm
\usepackage[ruled,vlined,linesnumbered]{algorithm2e}
\usepackage{float} % for [H] placement (here)

\renewcommand{\cref}[1]{(\ref{#1})}
\def\lpar{\left(}
\def\rpar{\right)}

\def\Htar{H_\text{target}}

\newcommand{\bea}{\begin{eqnarray}}
\newcommand{\eea}{\end{eqnarray}}
\newcommand{\beaa}{\begin{eqnarray*}}
\newcommand{\eeaa}{\end{eqnarray*}}
\newcommand{\del}[3]{\left1 3 \right2}

\newtheorem{theorem}{Theorem}

\newcommand{\bi}{\begin{itemize}}
\newcommand{\ei}{\end{itemize}}
\newcommand{\bc}{\begin{center}}
\newcommand{\ec}{\end{center}}

\newcommand{\bal}[1]{
\begin{align} #1 \end{align}
}

\def\ntraj{N_\text{traj}}

%peyman new commands

\newcommand{\QEval}{Q_{\textit{eval}}}

\newcommand{\Ito}{It\^{o} }
\usepackage{xspace}

\newcommand{\qiskit}[1][]{\textsc{Qiskit}\ifx#1\empty\else~(v#1)\fi\xspace}
\newcommand{\qiskitnature}[1][]{\textsc{Qiskit Nature}\ifx#1\empty\else~(v#1)\fi\xspace}
%===============================================================================================
%===============================================================================================
\begin{document}
\title{Path Integral Quantum Control for Quantum Chemistry Applications
}

\author{Peyman Najafi}
\email{peyman.najafi@donders.ru.nl}
\affiliation{ Radboud University, Heyendaalseweg 135, 6525 AJ Nijmegen, The Netherlands}

\author{Aar\'on Villanueva}
\affiliation{ Radboud University, Heyendaalseweg 135, 6525 AJ Nijmegen, The Netherlands}

\author{Hilbert Kappen}
\affiliation{ Radboud University, Heyendaalseweg 135, 6525 AJ Nijmegen, The Netherlands}

\begin{abstract}
The Path integral Quantum Control (PiQC) algorithm was recently introduced in Ref.~\cite{villanueva2025stochastic} as a new approach for computing optimal controls in open and closed quantum systems.
Originally proposed for pulse-based quantum control, PiQC estimates optimal controls through global averages over quantum trajectories.
In this work, we adapt the PiQC algorithm to optimize parametrized quantum circuits by showing that the quantum circuit can be randomized using a continuous dynamics governed by a stochastic Schrödinger equation that is compatible with the path integral control framework.
In this adaptation, the circuit parameters become the controls to be optimized within PiQC.
We refer to this instance of PiQC as the Gate-based PiQC (GB-PiQC) algorithm.
We apply GB-PiQC for ground state preparation of electronic structure problems.
We benchmark the gate-based and pulse-based versions of PiQC against the Variational Quantum Eigensolver (VQE), which is optimized using the common Simultaneous Perturbation Stochastic Approximation (SPSA) optimizer, on a set of standard molecular Hamiltonians: \ce{H2}, \ce{LiH}, \ce{BeH2}, and \ce{H4}, mapped to 2-, 4-, 6-, and 6-qubit systems, respectively.
For each molecule, the benchmark is implemented at different bond distances, after performing a hyperparameter tuning of each algorithm at a fixed bond distance near the equilibrium geometry.
We find that both PiQC algorithms exhibit greater robustness than SPSA to variations in the target Hamiltonian induced by changes in molecular bond distances. 
Furthermore, PiQC algorithms also achieve superior performance compared to SPSA in most instances, particularly at stretched bond lengths, where the Hartree–Fock solution becomes less accurate and its error grows relative to equilibrium. 
\end{abstract}

\maketitle

%==========================================================================

\section{Introduction}

Quantum algorithms for quantum chemistry represent one of the most promising near-term applications for quantum devices \cite{tilly2022variational, grimsley2019adaptive}.
For ground state energy estimation of molecular Hamiltonians, hybrid quantum-classical approaches such as the Variational Quantum Eigensolver (VQE) \cite{peruzzo2014variational}, which belongs to the broader class of Variational Quantum Algorithms (VQAs), have been widely used.
Despite their success, VQAs face persistent challenges related to trainability, accuracy, and efficiency \cite{cerezo2021variational}.

Recent works have approached the ground state preparation problem from a different perspective, by optimizing control fields directly at the pulse level rather than parametrized gate sequences at the circuit level~\cite{magann2021pulses}.
Notable examples include Variational Quantum Optimal Control (VQOC) \cite{dekeijzer2023pulse} and ctrl-VQE \cite{meitei2021gatefree}, which treat control amplitudes as variational parameters to guide the system’s evolution.

The Path Integral Quantum Control (PiQC) algorithm~\cite{villanueva2025stochastic} was recently proposed for pulse-based quantum control of analog platforms in the presence or absence of dissipation.
The algorithm optimizes coherent control pulses by combining the stochastic Schrödinger equation (SSE) with path integral control techniques drawn from stochastic optimal control theory.
What differentiates PiQC from other standard quantum control approaches lies in the optimization process, which uses adaptive importance sampling over quantum trajectories to define an update rule scheme to iteratively improve the controls.
For the control of closed systems, PiQC implements an annealing schedule in the dissipation part that works as a parametrized ansatz to guide the optimization.
It has been noted in recent literature that noise can be used as a resource to enhance the performance of control protocols in variational quantum circuits~\cite{foldager2022noise, ito2023santaqlaus, liu2025stochastic} and analog platforms~\cite{levy2017action, guimaraes2024optimized, sveistrys2025speeding}.
In this regard, PiQC can be classified as a noise-assisted quantum control technique for the control of closed systems.

In this work, we adapt the PiQC algorithm to optimize parametrized quantum circuits, in concrete, VQEs.
We bridge the gap between pulse-based and gate-based quantum control by drawing on the observation that a randomized version of the VQE can be recast as a continuous stochastic dynamics driven by Wiener noise, which is suitable for a PiQC formulation.
This mapping enables the use of PiQC for quantum circuits, where the circuit parameters are mapped to control pulses in the stochastic dynamics.
Our work realizes the perspective outlined in~\cite{magann2021pulses}, which emphasized the potential of connecting quantum optimal control with circuit-level variational algorithms.

We benchmark the pulse-based and gate-based PiQC algorithms against a common VQE that uses Simultaneous Perturbation Stochastic Approximation (SPSA) for computing the ground state energy of molecular systems.
For this comparison, we tuned the hyperparameters of both methods through a limited search over small parameter sets, performed for each molecule at an interatomic distance near equilibrium (minimum energy).
We find that PiQC yields superior performance in most regimes.
For fixed tuned hyperparameters, PiQC is more robust across changes in the target problem (here, we varied the molecular interatomic distance), while SPSA is more sensitive to such variations.
Our results suggest that PiQC, both in pulse-based and gate-based form, offers a compelling alternative to variational methods for quantum chemistry applications.

The rest of the paper is organized as follows.
In Section~\ref{sec:piqc} we review the Path integral Quantum Control algorithm for pulse-based control and its application to closed quantum systems.
In Section~\ref{sec:GB_PiQC} we adapt the PiQC algorithm for VQEs, providing an exact mapping between the circuit action and a stochastic continuous dynamics compatible with the path integral control formulation.
After outlining our SPSA implementation in Section~\ref{sec:SPSA}, in Section~\ref{sec:Drift_Hamiltonian} we describe the drift Hamiltonian and hardware assumptions used in our simulations.
In Section~\ref{sec:faircomparison} we explain the conditions for fair comparison between the gate-based and pulse-based algorithms and specify the circuit ansatz used for simulations.
Finally, in Section~\ref{sec:results} we compare both gate-based and pulse-based PiQC algorithms with the VQE optimized via SPSA.
We conclude in Section~\ref{sec:conclusion} with an analysis of our findings.

%==========================================================================

\section{Background: the path integral quantum control algorithm}\label{sec:piqc}

\textit{Control problem definition.---}The Path integral Quantum Control (PiQC) algorithm~\cite{villanueva2025stochastic} uses the path integral control theory~\cite{kappen2005linear, thijssen2015path} to formulate a class of deterministic open-loop quantum control problems as a stochastic optimal control problem, and approximate the optimal controls as iterated averages over continuous quantum trajectories.

Given an objective function, PiQC computes coherent control pulses for a class of open quantum systems whose dynamics are governed by the Lindblad master equation with dissipators satisfying an anti-Hermitian property~\cite{villanueva2025stochastic}.
For closed quantum systems, PiQC computes optimal controls by gradually reducing environmental coupling strengths throughout the optimization process, annealing them down to small (but non-zero) values.
By the end of the optimization, the open-system dynamics closely approximates unitary evolution, with the resulting controls providing a good approximation to the dissipation-free ideal case.

In this work we focus on applications of the annealing variant of PiQC for ground state preparation of unitary dynamics, and refer the reader to~\cite{villanueva2025stochastic} for further details.
In PiQC, one begins by defining an open-loop control problem for an open system dynamics in Lindblad~\cite{breuer2002theory} form
\bal{\label{eq:lindblad}
\dot \rho = -i[H_0 + u_a H_a] \rho + D_{ab} \lpar H_a \rho H_b -\frac{1}{2} \{H_a H_b, \rho\} \rpar\,.
}
where $H_0,\, H_a\, (a=1, \ldots, n_c)$ are Hermitian operators, $u_a\, (a=1, \ldots, n_c)$ are the control fields, and $D$ is a positive definite symmetric matrix.
From now on, repeated indices imply summation.
In the limit of vanishing environmental coupling $D$, the dynamics above reduces to the dynamics of the closed system we aim to control.
We assume that the initial state $\rho(0)$ is pure.

Assume a total time interval $[0, T]$, with $T$ the final time, and define the cost objective
\bal{\label{eq:det-cost}
C[u] =  \frac{Q}{2}\Tr(\Htar \rho(T)) + \frac{1}{2} \int_0^T u(t)^\top R u(t) dt
}
with $\Htar$ the target Hamiltonian whose ground state we want to prepare, $\rho(T)$ is the state at the final time $T$, $Q$ is a positive real number, $R$ is a positive matrix, and $u$ the vector of controls.
The first term in~\cref{eq:det-cost} is the end cost and encodes the observable we aim to optimize.
The second term is commonly referred to as the fluence and penalizes the overall control energy.
Including this term is mandated by the path integral control framework.
The values $Q$ and $R$ are tunable parameters that regulate the trade-off between the end cost and the fluence.

We remark that the particular form of the open dynamics in~\cref{eq:lindblad} is not arbitrary, and is chosen that way in order to apply the path integral control formalism~\cite{villanueva2025stochastic}.
Given a suitable space of open-loop controls $u$, the control problem for closed systems consists in minimizing the objective~\cref{eq:det-cost} with dynamics~\cref{eq:lindblad} in the limit of vanishing environmental coupling $D$.

%==========================================================================

\textit{SOC formulation.---}The control problem with dynamics~\cref{eq:lindblad} and cost objective~\cref{eq:det-cost} is turned into a SOC problem by simulating the open dynamics with a stochastic Schrödinger equation (SSE) of the form
\bal{\label{eq:linear-sse}
d\psi(t) =& -i H_0 \psi(t) dt -i H_a \psi(t) (u_a(t) dt + dW_a(t)) \nonumber\\
&-\frac{1}{2} D_{ab} H_a H_b \psi(t) dt
 \,,
}
where $dW_a(t)$ are Wiener increments satisfying $\mathbb{E}(dW_a(t) dW_b(t)) = D_{ab} dt$.
The expectation $\mathbb{E}(\cdot)$ is taken over all noise realizations $W_{0:T}$ in the time interval $[0, T]$.
The last term in~\cref{eq:linear-sse} is such that the dynamics preserves the norm of the state, i.e. $d\|\psi(t)\|^2 = 0$ in the \Ito sense.
It is easy to show using \Ito calculus that~\cref{eq:linear-sse} is an unraveling of the Lindblad equation~\cref{eq:lindblad} in the sense that the evolution of the average state over all noise realizations $\rho(t) = \mathbb{E}(\psi(t) \psi(t)^\dagger)$ follows~\cref{eq:lindblad}.
We say that equation~\cref{eq:linear-sse} is in path integral form in the sense that it fits the type of dynamics that can be addressed within the path integral control theory~\footnote{%
We briefly recall that a SSE is in path integral form if it can be written as $d\psi_t = f(t, \psi_t) dt + g(t, \psi_t)(u(t) dt + dW(t))$ for some vector-valued complex functions $f$ and $g$. See~\cite{villanueva2025stochastic}.
}.

For open-loop controls, the cost objective~\cref{eq:det-cost} can be equivalently written as an average over quantum trajectories governed by~\cref{eq:linear-sse}, i.e.
\bal{\label{eq:cost-S}
C[u] = \mathbb{E}(S^u)
}
where
\bal{\label{eq:stoch-cost}
S^u =& \frac{Q}{2}\braket{\psi(T)| \Htar | \psi(T)} + \frac{1}{2} \int_0^T u(t)^\top R u(t)
\nonumber\\
&+ \frac{1}{2}\int_{0}^{T} u(t)^\top R dW(t)
}    
where $dW(t)$ is the vector of Wiener increments.
This equivalence can be easily seen by taking the expectation of $S^u$, noticing that the last integral vanishes under the expectation since it represents an \Ito integral, and identifying $\rho(T) = \mathbb{E}(\psi(T) \psi(T)^\dagger)$.
Then the deterministic control problem can be formulated as a SOC problem with cost objective~\cref{eq:cost-S} and stochastic dynamics~\cref{eq:linear-sse} which can be solved using path integral control techniques.

\textit{Adaptive importance sampling.---}We define the open-loop control model.
Consider a partition of the time interval $[0, T]$ such that $t_0 = 0 < t_1 < \ldots < t_K = T$, and define $u_a$ as a piece-wise constant control composed of $K$ pulses, i.e.
\bal{
u_a(t) = \sum_{k=1}^K u_{ak} \bm{1}_{I_k}(t) \quad a=1, \ldots, n_c
}
with $u_{ak}$ time-independent constants and $\bm{1}_{I_k}$ indicator functions such that $\bm{1}_{I_k}(t) = 1$ if $t \in I_k = [t_{k-1}, t_k]$ and zero otherwise, for $k=1, \dots, K$.

Provided that the matrices $R$ and $D$ are related by $R=D^{-1} \lambda$ for some real $\lambda > 0$ --this is a necessary requirement of the path integral formalism-- the PiQC algorithm gives an update rule to compute the optimal pulses using adaptive importance sampling (AIS)~\cite{villanueva2025stochastic, thijssen2015path}.
Starting with an initial guess for the control pulses $u^{(0)}_{ak}$, at the optimization step $p$ the update rule for $u^{(p)}$ is given by
\bal{\label{eq:importance_sampling_update}
u^{(p+1)}_{ak} = u^{(p)}_{ak} + \mathbb{E} \lpar \omega^{(p)} \frac{\Delta W_{ak}}{\Delta t_k} \rpar,
\quad k = 1, \dots, K\,,
}
where $\Delta W_{ak} := \int_{I_k} dW_a(t)$, $\Delta t_k = t_k - t_{k-1}$, and the expectation $\mathbb{E}(\cdot)$ is computed over trajectories sampled using the controls $u^{(p)}_{ak}$ computed in the previous step.
The importance sampling weights are given by
\bal{\label{eq:omega_p}
\omega^{(p)} := \frac{e^{-S^{(p)}/\lambda}}{\mathbb{E}\lpar e^{-S^{(p)}/\lambda} \rpar}\,,
}
where $S^{(p)}$ denotes the stochastic cost $S^u$ defined in~\cref{eq:stoch-cost} using the control $u = u^{(p)}$ computed at the previous step.

Intuitively, the control pulses $u_{ak}$ are iteratively improved using formula~\cref{eq:importance_sampling_update} in directions where the quantum trajectories, which are steered using the current control, contribute to minimize the cost objective.
This information is encoded in the importance sampling weights $\omega^{(p)}$.

\textit{PiQC for unitary dynamics.---}At a fixed noise matrix $D$, the AIS rule~\cref{eq:importance_sampling_update} gives a recipe to approximate the optimal controls that minimize the cost~\cref{eq:det-cost} for the open dynamics~\cref{eq:lindblad}.
In order to compute the optimal controls for the corresponding closed dynamics in the limit of $D \rightarrow 0$, one introduces an annealing schedule in the environmental coupling $D_j\, (j=0, \ldots, n_D-1)$, where $n_D$ is the number of annealing steps.
The schedule is defined such that it monotonically decreases the values of $D$ across optimization steps.
If by the end of the optimization $D$ is sufficiently small, the resulting controls $u_{ak}$ will approximate well the optimal controls of the corresponding unitary control problem.

For practical purposes, assume that the noise matrix $D$ is proportional to the identity, with $D$ representing now a positive real number.
Because the relation $R=D^{-1}\lambda$, $R$ also becomes a real number.
In Alg.~\cref{alg:annealedPiQC} we summarize the PiQC algorithm for ground state preparation with exponential noise schedule~\cite{villanueva2025stochastic}, which is used later in the numerical experiments.
The workflow of the algorithm is as follows.
At the start of the optimization, we set an initial guess for the control pulses $u^{(0)}_{ak}$ and define the annealing schedule $D_j\,(j=0, \ldots, n_D-1)$ with $D_\text{init}$ and $D_\text{final}$ being the extreme values.
For each value $D_i$ we run the AIS optimization given by formula~\cref{eq:importance_sampling_update} for a number $n_s$ of steps (steps 1-6 in Alg.~\cref{alg:annealedPiQC}).
Each AIS step consists of sampling $\ntraj$ trajectories using the SSE~\cref{eq:linear-sse} with the current control values, and collecting the statistics necessary for the control update.
The value $n_s$ of AIS steps must be such that the algorithm has enough time to optimize for that particular value of $D_j$.
This procedure is repeated iteratively until the annealing schedule reaches its final value $D_\text{final}$.
Upon completion of the optimization process (step 7 in Algorithm~\cref{alg:annealedPiQC}), we collect the energy expectation values computed from the final states at time $T$ across all quantum trajectories, and select the minimum value as our estimate of the target Hamiltonian's ground state energy.
The corresponding final state $\rho(T) \approx \ket{\text{GS}} \bra{\text{GS}}$ provides a high-fidelity approximation of the ground state $\ket{\text{GS}}$ of the target Hamiltonian.
\begin{algorithm}[H]
\SetAlgoLined
\SetInd{0.5em}{0.5em}
\SetKwData{Left}{left}\SetKwData{This}{this}\SetKwData{Up}{up}
\SetKwFunction{Union}{Union}\SetKwFunction{FindCompress}{FindCompress}
\SetKwInOut{Input}{input}\SetKwInOut{Output}{output}
\Input{Initial controls $\{u^{(0)}_{ak}\}$, initial noise $D_{\text{init}}$, final noise $D_{\text{final}}$, number of trajectories $\ntraj$, number of AIS steps $n_s$, number of annealing steps $n_D$}
\Output{$E$}
\BlankLine
\For{$j=0$ \KwTo $ n_D-1$}{
$D_j \gets D_{\text{init}} \times \left( \frac{D_{\text{final}}}{D_{\text{init}}} \right)^{j / (n_D - 1)}$\\
    \For{$p=0$ \KwTo $n_s-1$}{
    Update controls using noise matrix $D_j$ with $\ntraj$ trajectories with using~\cref{eq:importance_sampling_update}\\
    }
}
From the final iteration, collect energies $E^{(i)}  = \braket{\psi(T)^{(i)} |H | \psi(T)^{(i)}}$ for $i = 1, \dots, \ntraj$\\
\Return $E = \min_i E^{(i)}$
\\
\caption{PiQC with Exponential Noise Schedule}\label{alg:annealedPiQC}
\end{algorithm}

%==========================================================================

\section{Gate-Based Path Integral Quantum Control} \label{sec:GB_PiQC}

In this Section we adapt the PiQC algorithm for variational quantum eigensolvers (VQEs).
VQEs were first introduced in~\cite{peruzzo2014variational} as hybrid quantum-classical algorithms designed to estimate the ground state energy of a quantum system.
Given a parametrized circuit ansatz $U(\bm{\theta})$ with $\bm{\theta} \in \mathbb{R}^L$, a VQE prepares a quantum state $\ket{\psi(\bm{\theta})}$ that approximates the ground state of the target Hamiltonian $\Htar$ by
iteratively updating the circuit parameters to minimize the expected energy $E(\bm{\theta}) = \braket{\psi(\bm{\theta}) | \Htar | \psi(\bm{\theta})}$.

Consider a general parameterized quantum circuit of the form
\begin{equation}
\label{eq:general_ansatz}
U(\boldsymbol{\theta}) 
= \prod_{\ell=1}^{L} \Big[ V^{(\ell)}\, R^{(\ell)}(\boldsymbol{\theta}_\ell) \Big],
\end{equation}
composed of $L$ unitary blocks where in each block $\ell$, $V^{(\ell)}$ represents a generic unitary operator independent of the parameters $\boldsymbol{\theta}$ of the form $e^{-i H_0^{(\ell)}}$ with $H_0^{(\ell)}$ a local Hermitian operator, and the local rotations $R^{(\ell)}(\boldsymbol{\theta}_\ell)$ are defined by
\begin{equation}
%R_{\mathrm{loc}}^{(\ell)}(\boldsymbol{\theta}_l) = \prod_{m=1}^{M} \bigotimes_{q=1}^{n} R^{(\ell)}_{q, m}(\theta^{(\ell)}_{q,m}),
R^{(\ell)}(\boldsymbol{\theta}_\ell) = \prod_{m=1}^{M} \prod_{q=1}^{n} R^{(\ell)}_{q, m}(\theta^{(\ell)}_{q,m}),
\label{eq:R_loc_l}
\end{equation}
where the $R^{(\ell)}_{q, m}$ are single-qubit rotations of the form $R_{q, m}^{(\ell)}(\theta^{(\ell)}_{q,m}) = e^{-i \theta^{(\ell)}_{q,m} H^{(\ell)}_{q,m}/2}$ with $H^{(\ell)}_{q,m} \in \{\sigma^x_q, \sigma^y_q, \sigma^z_q\}$ Pauli operators acting on qubit $q$. 
The products in $\ell, m$ are taken in reversed order to align with the circuit convention, meaning that gates with smaller indices are applied earlier.
In Fig.~\ref{fig:hardware_efficient_ansatz_with_time} we illustrate the block architecture of the ansatz given in Eq.~\cref{eq:general_ansatz}.
\begin{figure*}
    \centering
\includegraphics[width=0.6\linewidth]{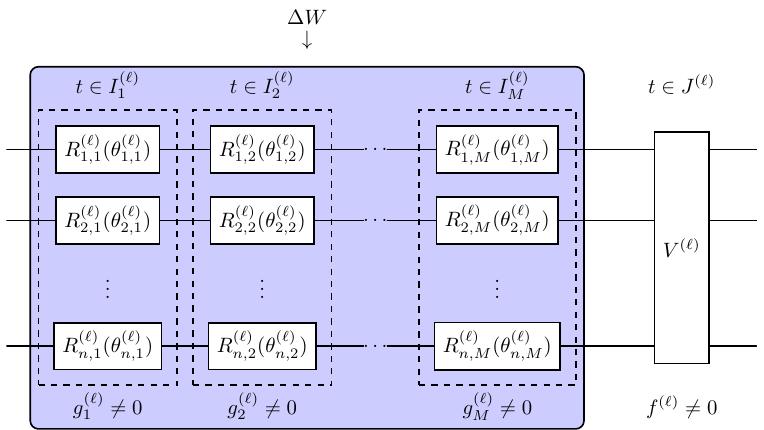}
\caption{Block $\ell$ corresponding to the general ansatz defined in~\ref{eq:general_ansatz}.
The ansatz consists of $L$ repetitions of this pattern.}  
\label{fig:hardware_efficient_ansatz_with_time}
\end{figure*}
Quantum circuits of the form~\cref{eq:general_ansatz} and~\cref{eq:R_loc_l} provide a natural choice for VQAs: their gradients can be evaluated efficiently for most objective functions~\cite{mcclean2018barren}, and their flexible structure accommodates widely used ansatz families, such as the hardware-efficient ansatz (HEA) for quantum chemistry applications~\cite{kandala2017hardware}.

To apply PiQC to the parametrized circuit~\cref{eq:general_ansatz}, we first introduce Wiener noise into the circuit parameters, allowing the action of $U(\bm{\theta})$, given an initial state $\ket{\psi_0}$, to be reformulated as a stochastic Schrödinger equation compatible with the path integral control formalism~\cite{villanueva2025stochastic}.

We assume, without loss of generality, that each rotation $R^{(\ell)}_{q, m}$ and drift $V^{(\ell)}$ layers act for a total time of length one.
Then, each block $\ell$ acts in a time interval of length $M+1$, and the entire circuit acts for a time interval $[0, T]$ with $T=L(M+1)$.
In the time interval $[0, T]$ consider the following SSE
\begin{widetext}
\begin{equation}
    d\psi(t)  = - i \sum_\ell f^{(\ell)}(t)H^{(\ell)}_0 \psi(t) dt 
    - \sum_{q, m, \ell} \Big[ \frac{i}{2} \left(\theta^{(\ell)}_{q,m} dt + dW_q(t) \right) g^{(\ell)}_m(t) H^{(\ell)}_{q,m} \psi(t) + \frac{1}{8} g^{(\ell)}_m(t)D\psi(t) dt \Big],
    \label{eq:continous_dynamic}
\end{equation}
\end{widetext}
where the functions $f^{(\ell)}(t)$ and $g^{(\ell)}_m(t)$ are defined as
\begin{align}
& f^{(\ell)}(t) =
\begin{cases}
1, & \text{if } t \in  J^{(\ell)}, \\
0, & \text{otherwise}
\end{cases}\\
    & g^{(\ell)}_m(t) =
\begin{cases}
1, & \text{if } t \in  I^{(\ell)}_m, \\
0, & \text{otherwise},
\end{cases}
\end{align}
and disjoint intervals $J^{(\ell)}$ and $I^{(\ell)}_m$ defined as
\begin{align}
    I^{(\ell)}_m =& \big[ (\ell - 1)(M+1) + m - 1, \nonumber \\
    &(\ell - 1)(M+1) + m \big]
    \label{eq:Ilcm}\,, \\
    J^{(\ell)} =& \big[(\ell - 1)(M+1) + M, \nonumber \\
    &(\ell - 1)(M+1) + M + 1 \big]\,.
    \label{eq:Il0}
\end{align}
The symbols $\theta^{(\ell)}_{q,m}$ represent time-independent constants and the Wiener increments $dW_q(t)$ satisfy $\mathbb{E}(dW_q(t) dW_{q'}(t)) = \delta_{q,q'} D dt$ with $D$ a positive constant.
\begin{theorem}[Randomized circuit as continuous-time dynamics]
\label{th:stoch-circuit-map}
The state $\psi(T)$ generated by the stochastic dynamics given by~\cref{eq:continous_dynamic} from the initial time $t=0$ to the final time $T=L(M + 1)$ is equivalent to the state generated by the parameterized circuit given by~\cref{eq:general_ansatz} with gate parameters and entangling blocks given by 
\begin{align}
    & \Tilde{\theta}^{(\ell)}_{q,m}  = \theta^{(\ell)}_{q,m}  +  \Delta W^{(\ell)}_{q,m} 
    \label{eq:u_to_theta_1}\\
    &V^{(\ell)}  = e^{-i H^{(\ell)}_0} 
\end{align}
where $\Delta W^{(\ell)}_{q,m} = \int_{I^{(\ell)}_m} dW_q(t)$, and $I^{(\ell)}_{m}$ is given by~\cref{eq:Ilcm}.
\end{theorem}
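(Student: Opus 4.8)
\emph{Proof strategy.} The plan is to exploit the disjoint-interval structure of the indicator functions $f^{(\ell)}$ and $g^{(\ell)}_m$. Since the intervals $\{I^{(\ell)}_m\}_{m=1}^{M}$ and $J^{(\ell)}$ defined in~\cref{eq:Ilcm} and~\cref{eq:Il0} partition $[0,T]$, at any instant $t$ exactly one of these indicators is nonzero. Consequently the SSE~\cref{eq:continous_dynamic} decouples into a time-ordered sequence of independent sub-dynamics, one per subinterval, and the propagator from $0$ to $T$ is the ordered composition of the per-interval propagators. First I would treat a drift interval $J^{(\ell)}$: there the noise and dissipation terms switch off and the equation reduces to the deterministic Schr\"odinger equation $d\psi = -i H^{(\ell)}_0\psi\,dt$ over a unit-length interval, whose propagator is $e^{-iH^{(\ell)}_0}=V^{(\ell)}$, matching the claimed entangling block.

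\emph{Rotation intervals.} The core of the argument is the analysis on a rotation interval $I^{(\ell)}_m$, where only $g^{(\ell)}_m=1$ and the dynamics becomes
\[
d\psi = -\sum_{q}\left[\frac{i}{2}\left(\theta^{(\ell)}_{q,m}\,dt + dW_q\right)H^{(\ell)}_{q,m}\psi + \frac{1}{8}D\,\psi\,dt\right].
\]
Since the Pauli generators $H^{(\ell)}_{q,m}$ for distinct $q$ act on different qubits they commute, and the Wiener processes $W_q$ are independent; hence I would seek a solution that factorizes as $\psi(t)=\prod_q V_q(t)\,\psi(t_0)$. For each qubit I propose the stochastic exponential $V_q(t)=\exp[-\tfrac{i}{2}H^{(\ell)}_{q,m}\phi_q(t)]$ with accumulated phase $\phi_q(t)=\theta^{(\ell)}_{q,m}(t-t_0)+\big(W_q(t)-W_q(t_0)\big)$, so that $d\phi_q=\theta^{(\ell)}_{q,m}\,dt+dW_q$ and $(d\phi_q)^2=D\,dt$. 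Applying It\^o's lemma gives
\[
dV_q = -\frac{i}{2}H^{(\ell)}_{q,m}V_q\,d\phi_q - \frac{1}{8}\big(H^{(\ell)}_{q,m}\big)^2 V_q\,(d\phi_q)^2 .
\]
The crucial point is that $\big(H^{(\ell)}_{q,m}\big)^2=I$ for Pauli operators, so the second-order It\^o term collapses to exactly $-\tfrac{1}{8}D\,V_q\,dt$, which is precisely the dissipation term in~\cref{eq:continous_dynamic}. Thus $V_q$ solves the single-qubit equation exactly.

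\emph{Assembly.} I would then assemble the factorized solution. Using the It\^o product rule on $\prod_q V_q$, and noting that the cross terms $(dV_q)(dV_{q'})$ for $q\neq q'$ vanish because $dW_q\,dW_{q'}=\delta_{qq'}D\,dt$, the product satisfies the reduced SSE on $I^{(\ell)}_m$, with the total dissipation $-\sum_q\tfrac18 D\,\psi\,dt$ correctly reproduced. Evaluating at the right endpoint, where $t-t_0=1$, gives $\phi_q=\theta^{(\ell)}_{q,m}+\Delta W^{(\ell)}_{q,m}=\tilde\theta^{(\ell)}_{q,m}$ as in~\cref{eq:u_to_theta_1}, so the interval propagator equals $\prod_q R^{(\ell)}_{q,m}(\tilde\theta^{(\ell)}_{q,m})$, one factor of~\cref{eq:R_loc_l}. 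Finally I would chain the intervals in temporal order: within block $\ell$ the rotations $I^{(\ell)}_1,\dots,I^{(\ell)}_M$ precede the drift $J^{(\ell)}$, and the blocks run $\ell=1,\dots,L$, which reproduces the ordered product $\prod_{\ell=1}^{L}\big[V^{(\ell)}R^{(\ell)}(\tilde{\boldsymbol{\theta}}_\ell)\big]$ of~\cref{eq:general_ansatz} together with its reversed-order convention.

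\emph{Main obstacle.} I expect the crux to be the It\^o verification on the rotation intervals: recognizing that the $\tfrac18 D$ dissipation term is not an independent modeling choice but is forced to equal the second-order It\^o correction of the stochastic exponential, an identification that hinges entirely on the involutivity $H^2=I$ of the single-qubit Pauli generators. A secondary subtlety is justifying the qubit-wise factorization rigorously, i.e.\ checking that independence of the $W_q$ makes all It\^o cross terms vanish, so that the product ansatz solves the multi-qubit equation exactly rather than only approximately.
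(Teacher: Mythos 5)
Your proposal is correct and follows essentially the same route as the paper's proof: decompose $[0,T]$ into the disjoint intervals $J^{(\ell)}$ and $I^{(\ell)}_m$, identify the drift propagator $e^{-iH^{(\ell)}_0}$, and verify via \Ito calculus that a stochastic exponential with accumulated phase $\theta^{(\ell)}_{q,m}(t-t_0)+\big(W_q(t)-W_q(t_0)\big)$ solves the rotation-interval dynamics, with the $\tfrac{1}{8}D$ term arising as the second-order \Ito correction through $\big(H^{(\ell)}_{q,m}\big)^2=I$. The only cosmetic difference is that you factorize the interval propagator qubit-by-qubit and invoke the \Ito product rule, whereas the paper writes a single exponential of the summed (commuting) effective Hamiltonian; your version spells out the bookkeeping somewhat more explicitly but proves the same thing the same way.
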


\begin{proof}
    Based on the definition of $f^{(\ell)}(t)$ and $g^{(\ell)}_m(t)$ at each $t \in [0, T]$ only one of these functions is nonzero, which depends on which interval includes $t$.  
    For $t \in J^{(\ell)}$ in which only $f^{(\ell)}(t)$ is nonzero, the dynamic is a unitary dynamic within a unit time interval evolving under the Hamiltonian $H^{(\ell)}_0$. Therefore, the propagator for this time interval is $e^{- i H^{(\ell)}_0}$, which is equal to $V^{(\ell)}$. 
    For $t \in I^{(\ell)}_m$ in which $g^{(\ell)}_m(t)$ is nonzero, define an effective $H_{eff}(t)$: 
    \begin{equation}
        H_{eff}(t) = \frac{1}{2}\int_{t^{(\ell)}_m}^{t} \sum_q \left( \theta^{(\ell)}_{q,m} dt + dW_q(t) \right) H^{(\ell)}_{q,m},
    \end{equation}
    where $t^{(\ell)}_m$ is the initial time in the interval $I^{(\ell)}_m$.
    Then $\psi(t) = e^{-i H_{eff}\left(t - t^{(\ell)}_m\right)}\psi(t^{(\ell)}_{m})$ is the solution of continuous dynamics in~\cref{eq:continous_dynamic}.
    This follows directly by taking the differential using \Ito calculus: 
    \begin{align*}
         d\psi(t) = - \sum_q \big[ &\frac{i}{2} \left( \theta^{(\ell)}_{q,m} dt + dW_q(t) \right) H^{(\ell)}_{q, m} \psi(t) \nonumber \\ &+ \frac{1}{8} D \psi(t) dt \big].
    \end{align*}
    As a result the propagator for this time interval is $\prod_{q=1}^{n} e^{-i \Tilde{\theta}^{(\ell)}_{q,m} H^{(\ell)}_{q,m}/2}$, which makes the propagator for time interval $I^{(\ell)} := \bigcup_{m=1}^M I^{(\ell)}_m$ to be equal to $R^{(\ell)}(\boldsymbol{\Tilde{\theta}}_l)$.   Therefore, by knowing the propagator for each of these disjoint intervals ($J^{(\ell)}$ and $I^{(\ell)}$), and the sequential order in which  $f^{(\ell)}(t)$ and $g^{(\ell)}_m(t)$ are defined, it becomes clear that the evolved state at final time $T$, matches the ansatz defined in~\cref{eq:general_ansatz}. 

\end{proof}

Following~\cite{villanueva2025stochastic}, we can derive an AIS update rule for the control pulses $\bm{\theta} = \{\theta^{(\ell)}_{q,m}\}$  appearing in~\cref{eq:u_to_theta_1}, that is similar to the AIS formula given by~\cref{eq:importance_sampling_update}.
Write the stochastic cost~\cref{eq:stoch-cost} in terms of the circuit parameters $\bm{\theta}$ as
\bal{\label{stoch-cost-circuit}
S^{\bm{\theta}} = \frac{Q}{2}\braket{\psi(\bm{\theta})| \Htar | \psi(\bm{\theta})} + \frac{R}{2} \bm{\theta}^\top \bm{\theta} + \frac{R}{2} \bm{\theta}^\top \Delta \bm{W}\,.
}
Then, AIS update rule is given by
\bal{\label{eq:IS-rule-circuit}
\bm{\theta}^{(p+1)} = \bm{\theta}^{(p)} + \mathbb{E}(\omega^{(p)} \Delta \bm{W})\,,
}
where $\Delta \bm{W} = \{\Delta W^{(\ell)}_{q,m} \}$ is the vector of integrated Wiener noises, see Theorem~\cref{th:stoch-circuit-map}.
The importance sampling weight $\omega^{(p)} = \frac{e^{-S^{(p)}/\lambda}}{\mathbb{E}(e^{-S^{(p)}/\lambda})}$ is defined in the same way as in~\cref{eq:importance_sampling_update}, with the stochastic cost now written as in~\cref{stoch-cost-circuit}.
The optimization problem consists in minimizing $C[\bm{\theta}] = \mathbb{E}(S^{\bm{\theta}})$ in the limit of $D\rightarrow  0$.
The proof of~\cref{eq:IS-rule-circuit} is straightforward~\cite{villanueva2025stochastic} and, for completeness, we write it in Appendix~\ref{app:IS-rule-circuit}.

We remark an important point.
In practice, in order to update the gate parameters of a circuit of the form~\cref{eq:general_ansatz} with PiQC, we do not generate continuous trajectories using equation~\cref{eq:continous_dynamic}.
Instead, we start with a random initialization of the circuit parameters $\bm{\theta}$ and generate $N_{\text{traj}}$ random states $\ket{\psi(\bm{\theta} + \Delta\bm{W})}$ using the quantum circuit.
Each state corresponds to an independent realization of the Wiener increment $\Delta \bm{W}$.
Each integrated Wiener increment $\Delta W^{(\ell)}_{q,m} = \int_{I^{(\ell)}_m} dW_q(t)$ is an independent Gaussian random variable of mean zero and variance $D |I^{(\ell)}_m|= D$, since $I^{(\ell)}_m$ has length one.
In the first stages of the optimization with annealing PiQC, since $D$ takes a large value, the relatively high variance of random circuit trajectories enables more effective exploration of the control landscape.
This leads to global statistics that dictate the direction of circuit parameter updates $\bm{\theta}$ via the AIS rule~\cref{eq:IS-rule-circuit}.
Note that the computational cost of computing the expected energy $E(\bm{\theta})$ at each optimization step does not scale with the number of circuit parameters, but it is tied to the number of trajectories that are necessary to achieve statistical accuracy.
This is in contrast to other standard methods for optimizing circuits, such as parameter-shift~\cite{mitarai2018quantum, schuld2019evaluating}.

%==========================================================================

\section{Simultaneous Perturbation Stochastic Approximation (SPSA)}\label{sec:SPSA}

In order to compare our results with other optimizers, we use the well-known simultaneous perturbation stochastic approximation (SPSA) algorithm~\cite{spall1987stochastic}, an optimization method that is well-suited for high-dimensional problems where obtaining gradient information is prohibitive.
SPSA estimates the gradient of the cost objective using only two evaluations of the cost function per iteration, which makes the computational cost per step independent of the number of circuit parameters.
Given a parameter vector \( \boldsymbol{\theta} \in \mathbb{R}^L \) and an the energy cost $E(\bm{\theta})$, at optimization step $j$ the update rule is given by~\cite{spall1998implementation}:
\begin{equation}
\boldsymbol{\theta}^{(j+1)} = \boldsymbol{\theta}^{(j)} - a_j \, \hat{\bm{g}}_j,
\end{equation}

where \(a_j\) is the learning rate and \( \hat{g}_j \) is the stochastic gradient estimate:
\begin{equation}
\hat{\bm{g}}_j = \frac{E(\boldsymbol{\theta}^{(j)} + c_j \boldsymbol{\Delta}_j) - E(\boldsymbol{\theta}^{(j)} - c_j \boldsymbol{\Delta}_j)}{2 c_j} \, \boldsymbol{\Delta}_j^{-1},
\end{equation}

where \( \boldsymbol{\Delta}_j \in \mathbb{R}^L \) (the inverse is taken element-wise) is a random perturbation vector drawn from a product of independent Rademacher distributions --i.e., each entry is sampled independently from a symmetric Bernoulli distribution over \(\{+1, -1\}\) with equal probability 0.5, and \(c_j\) controls the perturbation magnitude. In our numerical experiments, we use SPSA with a fixed learning rate $a_j = a$ and a fixed perturbation magnitude $c_j = c$ for the reasons explained in App.~\ref{app:spsa_hyperparams}.

\section{Drift Hamiltonian and Hardware Assumptions}\label{sec:Drift_Hamiltonian}

Rydberg atom platforms combine long coherence times, the flexibility of movable qubits, strong long-range interactions, and the ability to coherently switch interaction strengths, making them particularly well-suited for applications in quantum information science and quantum simulation \cite{saffman2010quantum, ludmir2024modeling, weber2017calculation}. Here, following~\cite{dekeijzer2023pulse}, we model the drift Hamiltonian based on a Rydberg-atom quantum computing platform with neutral atoms arranged in a one-dimensional array. Qubits are encoded in a ground--Rydberg state pair, and the system features always-on van der Waals interactions between atoms in the Rydberg state. The drift Hamiltonian is given by~\cite{dekeijzer2023pulse}:
\begin{equation}
H_0 = \sum_{i<j} \frac{C_6}{r_{ij}^6} \ket{11}_{ij}\bra{11}, 
\label{eq:drift_hamiltonian}
\end{equation}

where \( r_{ij} = r |i - j| \) is the distance between atoms \( i \) and \( j \), \( C_6 \) is the van der Waals interaction strength coefficient, and \( \ket{11}_{ij} \bra{11} \) is the projector onto both atoms being in the Rydberg state. 
%Only the van der Waals interaction is considered in this work; dipole-dipole interactions are neglected.
The atoms are assumed to be equally spaced along a line, with nearest-neighbor distance \( r \). This configuration gives rise to the long-range interactions decaying with the sixth power of the distance.

In neutral-atom systems driven by nearly monochromatic laser fields which are in resonance with the energy difference between the Rydberg and the ground states (no detuning), the interacting Hamiltonian term on a single qubit can be written as \cite{morgado2021quantum, dekeijzer2023pulse}:

\[
\frac{\Omega(t)}{2} \left( e^{i\phi(t)} \ket{0}\!\bra{1} + e^{-i\phi(t)} \ket{1}\!\bra{0} \right)\,.
\]
The parameter $\Omega(t)$ is the coupling strength and $\phi(t)$ denotes the phase of the laser acting on the qubit at time $t$. Therefore the control Hamiltonian acting on all the qubits can be modeled as 
\begin{equation}
H_c(t) = \sum_{q=1}^n \left( u_{qx}(t)\, \sigma^x_q + u_{qy}(t)\, \sigma^y_q \right),    
\end{equation}
where \( u_{qx}(t) \) and \( u_{qy}(t) \) are real-valued control amplitudes for the \( x \)- and \( y \)-axis rotations on qubit \( q \).

%==========================================================================

\section{Benchmarking methodology}
\label{sec:faircomparison}
In order to have a fair comparison between the gate-based and pulse-based algorithms, we need the system to evolve under similar circumstances (as much as possible).
The conditions for this fair comparison between a gate-based and pulse-based algorithm are discussed in~\cite{dekeijzer2023pulse}, and here we follow the same analysis.
For both gate-based and pulse-based ansatzes, the system is driven by the same drift Hamiltonian.
The only difference lies in the time-dependent interaction terms: pulse controls for the pulse-based approach and gate parameters for the gate-based approach.
Following~\cite{dekeijzer2023pulse}, we have chosen a HEA with the generic unitary operator $V^{(\ell)} = e^{-i\tau_V H_0}$, where $H_0$ is given in Eq.~(\ref{eq:drift_hamiltonian}), and the characteristic entanglement time is defined as \( \tau_V = 1 / V \), where \( V = C_6 / r^6 \) denotes the Van der Waals interaction strength between the two nearest neighbor qubits at distance $r$.
In each layer of the ansatz, we have three consecutive local rotations ($M=3$) generated by $H^{(\ell)}_{q,m} = \sigma^z_q, \sigma^x_q, \sigma^z_q$ for $m=1, 2, 3$, respectively, and each qubit $q$.

Timescales are used to express control durations and rescale simulation parameters consistently throughout the work.
We adopt a Rabi frequency of \( \Omega_R = 1\,\text{kHz} \), which sets the single-qubit gate time to \( \tau_g = 1/\Omega_R = 1\,\text{ms} \)~\cite{dekeijzer2023pulse}.
The interaction strength is tuned to \( V = 0.1\,\text{kHz} \), by adjusting the interatomic spacing \( r \) to a suitable value, so that $\tau_V=10$ms. These values satisfy the regime \( \tau_V \gg \tau_g \)~\cite{dekeijzer2023pulse}.
The total circuit duration for the gate-based algorithm is then modeled as \( T_{\text{GB}} = L(\tau_g + \tau_V) \), with \( L \) denoting the ansatz depth. Therefore for having the equivalent evolution processes for both algorithms, we need to fix the time of the pulse-based algorithm $T_{PB}$ to be equal to \( L(\tau_g + \tau_V) \). 

Apart from having similar evolution we also need to consider the total number of Hamiltonian evaluations given by the expected energy $E(\bm{\theta})$, which we will refer to as $\QEval$.
For a fair comparison, the algorithms should be compared under a fixed $\QEval$ budget.

In the case of SPSA, requires two quantum evaluations per iteration.
Thus, $\QEval = 2 \times I_{\text{SPSA}}$ where $I_{\text{SPSA}}$ denotes the number of iterations of the SPSA optimizer.

Both PiQC algorithms, GB-PiQC and PB-PiQC, require $\ntraj$ quantum evaluations per optimization step. Thus, $\QEval = \ntraj \times I_{\text{PiQC}}$.
In our benchmarks, the total quantum evaluations for the three algorithms are set to be equal.

%==========================================================================

\section{Results}
\label{sec:results}
\begin{figure*}[t]
    \centering
    %\textbf{PB-PiQC \& GB-PiQC for \ce{H2}(0.79~\AA{})}\par\medskip
    \includegraphics[width=.9\linewidth]{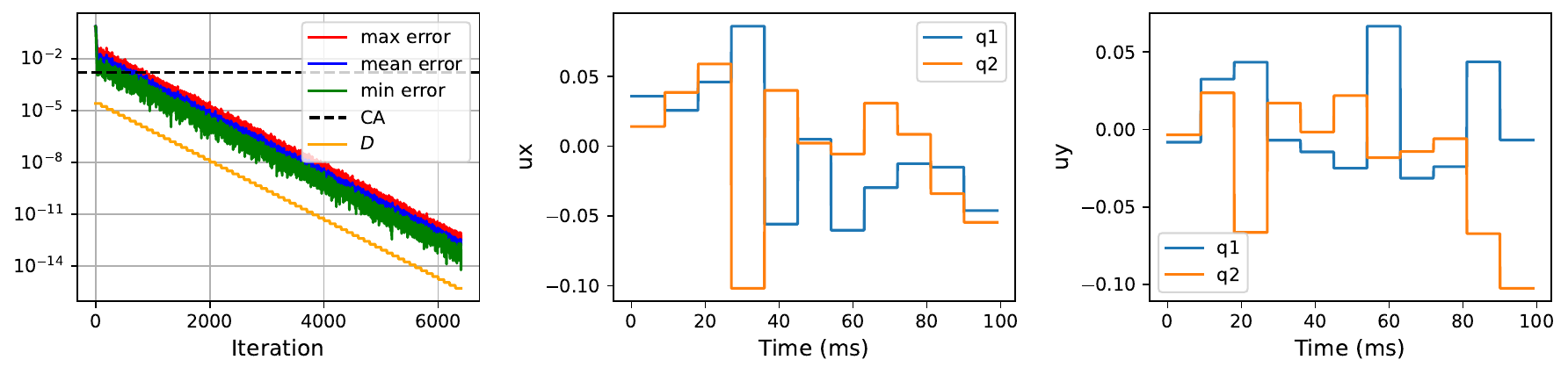}
    \vspace{0.75em} % small gap between top and bottom
    % Bottom plot
    \includegraphics[width=.9\linewidth]{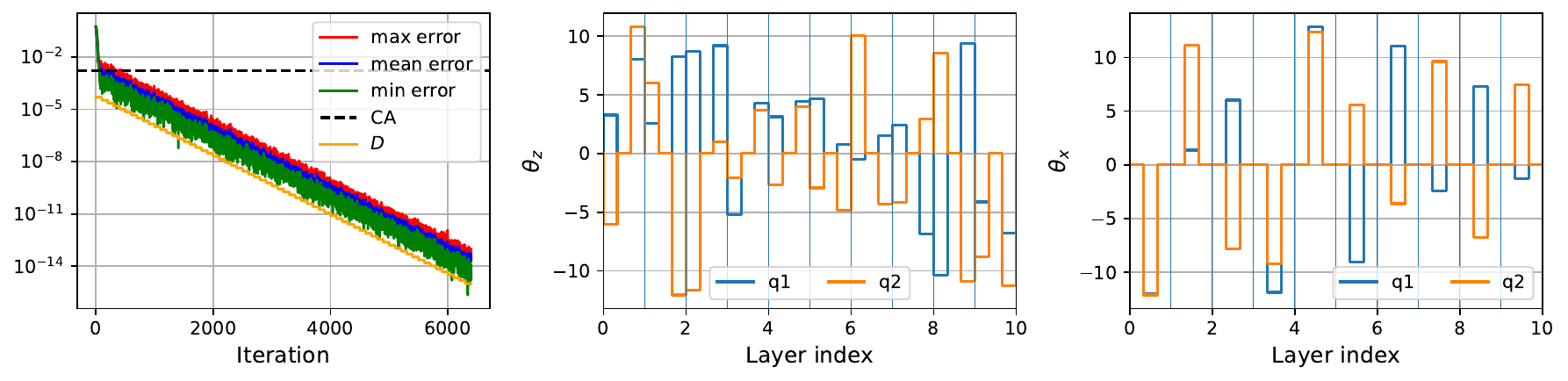}
    \caption{Application of the annealed PiQC algorithm to estimate the ground state of \ce{H2} at a bond distance of 0.79~\AA{}. 
\textbf{Top row (PB-PiQC):} \emph{Left}--evolution of the energy error (minimum, mean, and maximum across 10 trajectories) together with the annealing schedule for $D$, which gradually reduces the effective noise level. \emph{Middle}--optimized piecewise-constant $X$ control as a function of time for each qubit. \emph{Right}--optimized piecewise-constant $Y$ control for each qubit.
\textbf{Bottom row (GB-PiQC):} \emph{Left}—the same diagnostics for the gate-based run. \emph{Middle}—rotational angles of $R_Z$ gate plotted for different layer indexes, where each layer occupies a unit interval; within each layer, the first third shows the $\theta_z$ corresponding to the first $R_Z$ rotation gate and the last third shows the one corresponding to the last $R_Z$ rotation gate in each layer. \emph{Right}—rotational angles of $R_X$ gate $\theta_x$, occupying the middle third of each layer, corresponding to the middle $R_X$ rotation gate in each layer. Both PiQC algorithms lead to very low errors as the iterations progress.}
    \label{fig:H2_single_distance}
\end{figure*}
\begin{figure*}[ht]
    \centering
    \includegraphics[width=.9\linewidth]{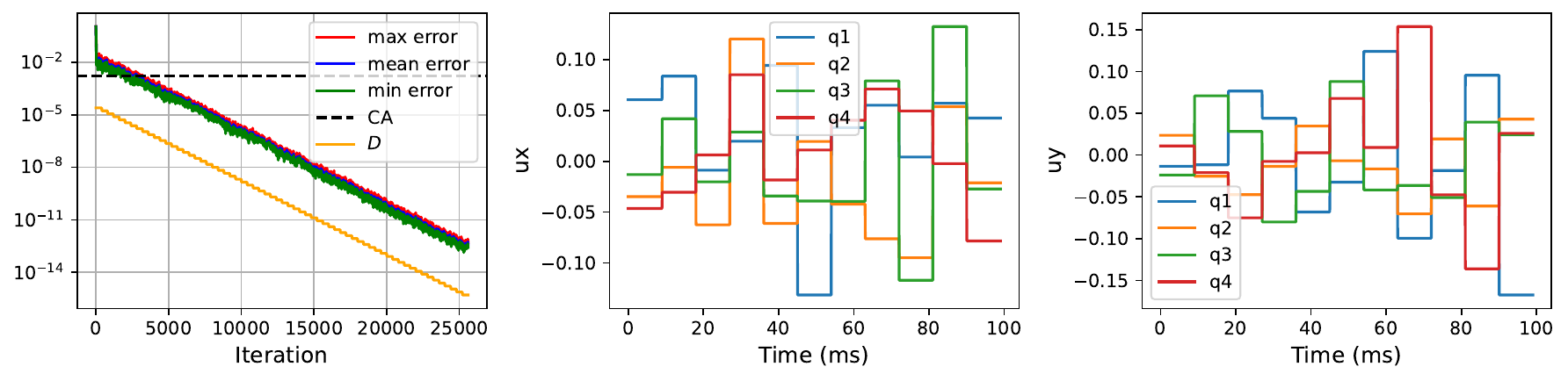}
    \vspace{0.75em} % small gap between top and bottom
    % Bottom plot
    \includegraphics[width=.9\linewidth]{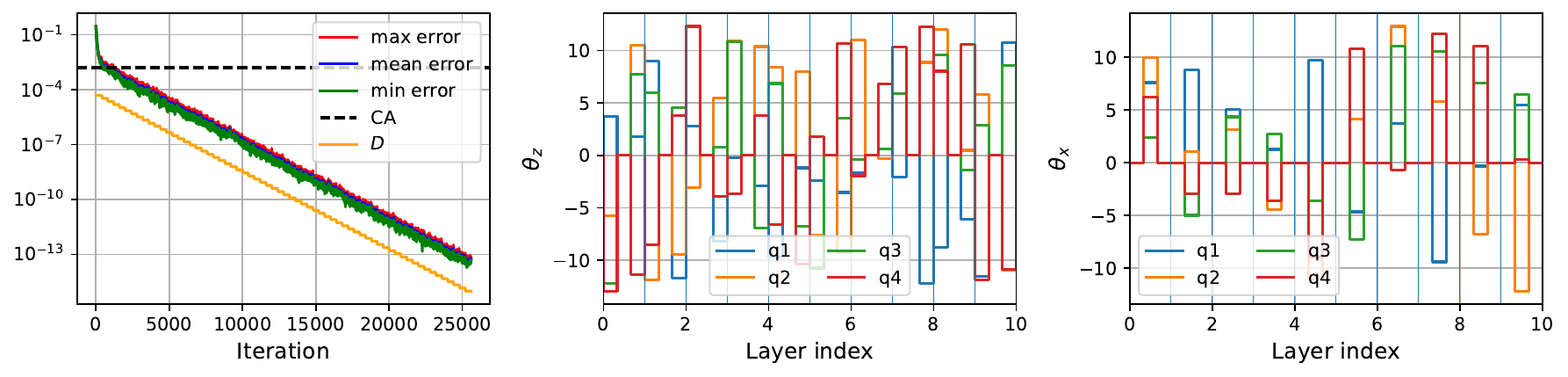}
\caption{Same plotting conventions as Fig.~\ref{fig:H2_single_distance}.
Results for \ce{LiH} at bond distance of 1.33~\AA{}. 
Top row (PB--PiQC): energy-error trace with $D$ schedule (left), gate parameters (middle/right).
Bottom row (GB--PiQC): energy-error trace with $D$ schedule (left), gate parameters (middle/right).}
    \label{fig:LiH_s1.33e_distance}
\end{figure*}
We prepare the ground state of  electronic structure problems for a set of small molecules: \ce{H2}, \ce{LiH}, \ce{BeH2}, and \ce{H4}. 
Molecular Hamiltonians were generated using \texttt{PySCFDriver} implemented in \qiskitnature package~\cite{the_qiskit_nature_developers_and_contrib_2023_7828768} with the STO-3G basis set, and subsequently mapped to qubit operators using various fermion-to-qubit transformations and reduction techniques from \qiskitnature tailored to each system.

For \ce{H2}, following Ref.~\cite{dekeijzer2023pulse}, the active space is restricted to the $1s$ orbitals, with higher-energy orbitals neglected. The Hamiltonian is mapped to 4 qubits using the parity transformation and subsequently reduced to 2 qubits via the standard two-qubit reduction ($\mathbb{Z}_2$ symmetry reduction) in \qiskitnature.

In the case of \ce{LiH}, the \texttt{ActiveSpaceTransformer} from \qiskitnature is used to select two molecular orbitals located near the Fermi level. The resulting Hamiltonian is then mapped to qubits using the Jordan--Wigner transformation, yielding a 4-qubit Hamiltonian.

For \ce{BeH2}, the molecule is aligned linearly with the \ce{Be} atom at the center and the two \ce{H} atoms placed symmetrically along the x-axis. The core molecular orbital is frozen, and the \ce{Be} $2p_y$ and $2p_z$ orbitals (molecular orbitals 3 and 4) are removed using the \texttt{FreezeCoreTransformer} in \qiskitnature to define a reduced active space. The parity mapping with $\mathbb{Z}_2$ symmetry reduction is then applied, yielding a 6-qubit Hamiltonian.

Finally, for \ce{H4}, similar to~\cite{dekeijzer2023pulse}, the hydrogen atoms are positioned linearly with uniform spacing. All four $1s$ orbitals are treated as active. The Hamiltonian is mapped using the parity mapping with $\mathbb{Z}_2$ symmetry reduction, resulting in a 6-qubit Hamiltonian.

The chosen molecules exhibit both weakly and strongly correlated ground states across different interatomic distances, making them suitable benchmarks for assessing quantum algorithms in varied regimes.
Exact ground state energies were obtained via direct diagonalization of the mapped qubit Hamiltonian as a reference, and energy errors were calculated with respect to this reference value.

To ensure fair and stable comparisons across molecular systems, we performed light hyperparameter tuning for both the SPSA and PiQC algorithms near the equilibrium geometry (lowest energy) of each molecule.
Specifically, hyperparameter tuning was carried out on the following interatomic distances: \ce{H2} at 0.72~\AA{}, \ce{LiH} at 1.60~\AA{}, \ce{BeH2} at 1.339~\AA{}, and \ce{H4} at 0.54~\AA{}.

For SPSA, we tested a small set of logarithmically spaced learning rates together with perturbation magnitudes selected as fixed fractions of the learning rate.
In the case of PiQC, the step size of the annealing schedule on the diffusion parameter \(D\) was explored, while other hyperparameters such as the number of trajectories, and initial and final values of \(D\), were fixed based on heuristic choices and preliminary experiments.
The tuning was deliberately kept limited in order to reduce computational cost and is intended to provide reasonable, rather than fully optimized, performance.
Since SPSA involves stochastic perturbations and PiQC incorporates Wiener increments, the expectation value outcomes vary with the random seed.
Therefore, for each hyperparameter configuration, we repeated the optimization using 5 different seeds and considered the median final error as the representative value. See App.~\ref{app:spsa_hyperparams} and App.~\ref{app:piqc_hyperparams} for details. 

In this work, we performed all the simulations related to quantum circuits with \qiskit[1.3.1]~\cite{qiskit2024}, and evaluated observables with \qiskit’s Estimator without any hardware or sampling noise. 

%==========================================================================

\subsection{Example: PiQC for Electronic Structure Problems}

We illustrate the application of PiQC for two representative electronic structure problems. 
Specifically, we consider the ground-state energy estimation of the \ce{H2} and \ce{LiH} molecules at interatomic distances of 0.79~\AA{} and 1.33~\AA{}, respectively. These examples serve to demonstrate the effectiveness of PiQC in optimizing the control landscapes for finding the ground state of molecular Hamiltonians.
Note that in all of our numerical experiments with PiQC, we have set the control parameter $R=1$ and the end cost parameter $Q$ to large values $Q\gg 1$ in order to achieve good end cost ($E$) results. 

In Fig.~\ref{fig:H2_single_distance} we show the performance of PiQC for \ce{H2} molecule.
Noise schedule, alongside the energy error and final calculated optimal controls are plotted.
The horizon time for the pulse-based case is $T=99$ ms, which corresponds to $L=9$ for the gate-based case, and piecewise constant controls were used with $K=11$ for pulse-based PiQC.
Following the notation of Algorithm~\ref{alg:annealedPiQC} for the noise schedule, $D_{init} = 2.5 \times 10^{-5}$, $D_{final} = 5 \times 10^{-16}$, $n_s=100$ and $n_D=64$. The number of trajectories $\ntraj$ used here and in all other experiments is 10, and $\QEval=6.4\times 10^{4}$. 
We define the energy error as the difference between the variational energy and the exact ground state energy obtained via exact diagonalization of the mapped qubit Hamiltonian. 

Fig.~\ref{fig:LiH_s1.33e_distance} shows a similar plot for \ce{LiH} molecule with same $D_{init}$ and $D_{final}$ as above, stepsize $n_s$ is 400 iterations, $n_D=64$ and $\QEval=2.56\times 10^{5}$. For both molecules, the error decreases exponentially with iterations. 
\begin{figure}[!ht]
    \centering
    \includegraphics[width=.8\linewidth]{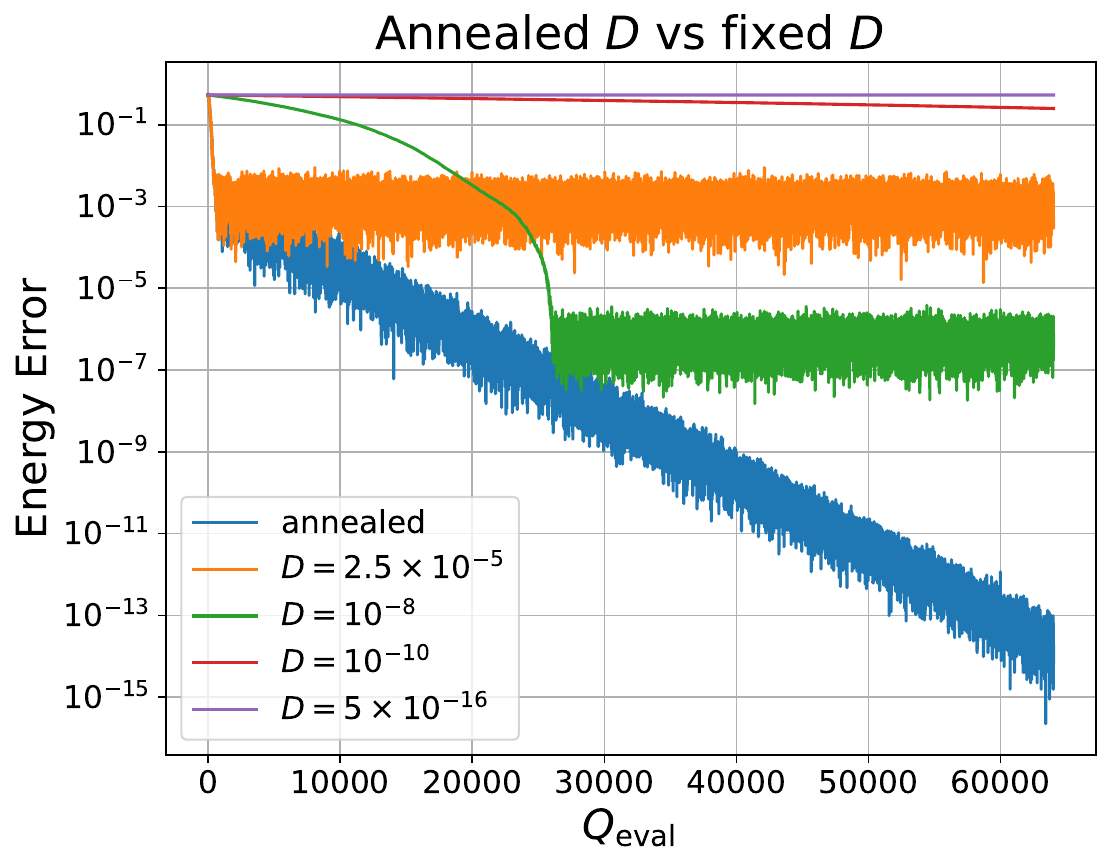}
    \caption{Energy error comparison between annealed and fixed $D$ parameters using GB-PiQC for \ce{H2} ground state estimation.
    In the annealed case, the diffusion parameter $D$ starts at $D_{\text{init}} = 2.5 \times 10^{-5}$ and exponentially decreases to $D_{\text{final}} = 5 \times 10^{-16}$.
    For the fixed-$D$ cases we use four different values of $D$ in decreasing order: $D_{\text{init}}\,, 10^{-8},\, 10^{-10},\, 5\times  10^{-16}$. 
    For fixed $D$ values, larger parameters ($D = D_{\text{init}}$) show rapid initial error reduction followed by saturation and fluctuations due to high noise.
    Smaller fixed values ($D = 10^{-8}, 10^{-10}, 5 \times 10^{-16}$) exhibit progressively slower convergence with diminishing error changes.
    We obtain analogous results using PB-PiQC.
    }
\label{fig:annealing_advantage}
\end{figure}

In Fig.~\ref{fig:annealing_advantage} we demonstrate that gradually reducing the parameter $D$ to small $D_\text{final}$ outperforms using small but fixed $D$, achieving better convergence within fewer steps.
For this, we apply GB-PiQC to compute the energy error across optimization steps using four different but fixed values of $D$, compared to a single run with annealing.
Beyond a certain number of optimization steps, the accuracy for fixed $D$ tends to saturate, while in the annealing case the accuracy shows exponential improvement over the same optimization window, achieving several orders of magnitude lower errors than the fixed $D$ cases.
This behavior arises from the sampling efficiency inherent in the quantum trajectory simulation process defined by~\cref{eq:IS-rule-circuit}.
The parameter $D$ controls the magnitude of stochastic fluctuations in the quantum trajectories through the Wiener increments $\Delta W^{(\ell)}_{q,m}$, which are Gaussian variables of variance $D$.
When $D$ is too small from the start, each quantum trajectory carries similar information, requiring a large number of trajectories $\ntraj$ to obtain reliable parameter updates and significantly slowing down convergence.
This is evident in the case where $D=5 \times 10^{-16}$ in Fig.~\cref{fig:annealing_advantage}.
Conversely, starting with larger values of $D$ introduces greater diversity among quantum trajectories, enabling more efficient exploration of the control landscape.
This diverse sampling allows reliable parameter estimation with fewer trajectories, accelerating the optimization process.
\begin{figure*}[!t]
    \centering
    \includegraphics[width=\linewidth]{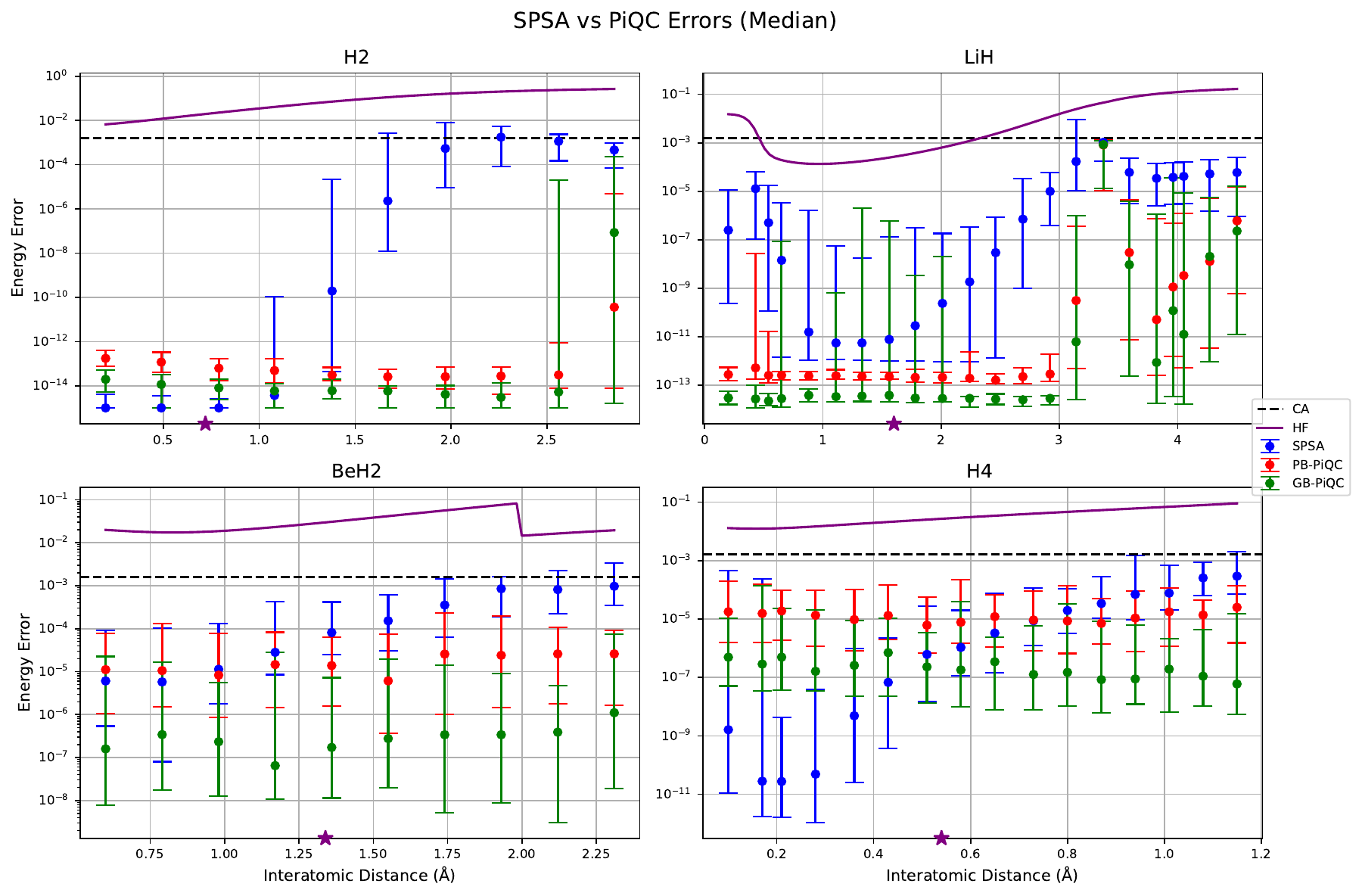}
    \caption{
Comparison of SPSA and PiQC across varying bond distances for four molecules: \textbf{\ce{H2}}, \textbf{\ce{LiH}}, \textbf{\ce{BeH2}}, and \textbf{\ce{H4}} (top left to bottom right). Each plot reports results across 20 different seeds, showing the median error (circle marker) and the range between minimum and maximum errors (vertical cap lines). The purple asterisk on the x-axis marks the bond distance used for hyperparameter tuning for each molecule. For \textbf{\ce{H2}}, both PB-PiQC and GB-PiQC exhibits more consistent performance than SPSA. SPSA's error increases with bond distance, and its worst-case error exceeds chemical accuracy at some large bond distances. In contrast for both PiQC algorithms, the worst-case error remains below this threshold throughout the distances. For \textbf{\ce{LiH}}, both PiQC algorithms outperforms SPSA across all bond distances in terms of median error. SPSA’s worst-case error crosses the chemical accuracy line at one bond distance, while both PiQC’s worst-case errors stays entirely below it. In \textbf{\ce{BeH2}}, PiQC algorithms again shows superior consistency. SPSA’s median error rises with bond distance, and its worst-case error exceeds chemical accuracy at the two largest bond distances, while for PiQC algorithms, the worst-case errors remain consistently below the threshold. Finally, in \textbf{\ce{H4}}, SPSA performs better than PiQC algorithms at short bond distances, but its performance deteriorates with increasing bond distance. Its worst-case error eventually surpasses chemical accuracy, whereas for PiQC algorithms the worst-case errors stay below chemical accuracy at all bond distances.
}
    \label{fig:benchmarks}
\end{figure*}

\subsection{PiQC and SPSA comparison}

Here we benchmark SPSA against PB-PiQC and GB-PiQC by first tuning the hyperparameters for each molecule near the equilibrium interatomic distance, then using the algorithm with the tuned hyperparameters over a range of distances. Since both PiQC algorithm and SPSA optimizer are stochastic (different realizations of the Wiener noise $W(t)$ and $\Delta$ lead to different outcomes) and the final result also depends on the initial guess of controls and gate parameters~\footnote{For PB-PiQC, initial controls were fixed to zero, and for SPSA and GB-PiQC, the gate parameters and virtual controls were sampled uniformly from the $[-2\pi, 2\pi]$.}, 20 different seeds were used. Median error over different seeds is reported. 

Fig.~\ref{fig:benchmarks} shows the benchmark results, where 
median error (circle marker) and the range between minimum and maximum errors (vertical cap lines) are plotted. $\QEval$ was set to $6.4\times10^{4}$ for \ce{H2}, and to $2.56\times10^{5}$ for \ce{LiH}, 
\ce{BeH2} and \ce{H4}. 

For \ce{H2} molecule, by increasing the bond distance, the median error of SPSA rises significantly, indicating sensitivity to changes in the target Hamiltonian. In contrast, both PiQC algorithms maintains a relatively stable and low error across the same range. This suggests that PiQC is more robust to variations in the underlying molecular Hamiltonian and generalizes better across problem instances.

For \ce{LiH} molecule, both PiQC algorithms outperforms SPSA at all the distances and similar to \ce{H2} case, PiQC is less sensitive towards changes of the bond distance. 

For \ce{BeH2} and \ce{H4} molecules, similar to previous cases, PiQC remains largely unaffected by variations in the target Hamiltonian across the bond distance ranges considered, while the final error of SPSA gradually increases by increasing the bond distance and comes close to exceeding the chemical accuracy threshold. In \ce{BeH2} case, GB-PiQC outperforms SPSA in all the distances, and PB-PiQC outperforms SPSA in all except the two smallest distances. In \ce{H4} case, SPSA outperforms PiQC algorithms at small distances, but after distance gets larger than approximately 0.45~\AA{}, GB-PiQC has the lowest median error. 

Overall, PiQC algorithms shows superior consistency over varying bond distances, indicating its robustness towards target Hamiltonian changes. Furthermore, for SPSA algorithm there is at least one distance for each molecule in which the maximum error of SPSA becomes larger than the chemical accuracy, while both PiQC's maximum errors are always below the chemical accuracy.

%==========================================================================

\section{Conclusion}\label{sec:conclusion}

PiQC was recently proposed as a novel generic method for pulse-based control of closed and open systems~\cite{villanueva2025stochastic}.
In this work, we successfully adapted the PiQC algorithm for optimizing variational quantum eigensolvers, developing a gate-based (GB-PiQC) variant that shows compelling advantages over common approaches such as SPSA.
Like SPSA, our approach uses a fixed number of objective function evaluations per optimization step, making it particularly suitable for high-dimensional parameterized quantum circuits.
%In this work, we adapted the PiQC algorithm to optimize parametrized quantum circuits, specifically, VQEs.
%We refer to this specific instance of the algorithm as GB-PiQC.

%Similar to SPSA optimizer, which uses a fixed number of evaluations per step (two) in order to update all the parameters, GB-PiQC also uses a fixed number of objective function evaluations ($\ntraj$), which can be set to any finite number depending on how much resource is available; this makes it suitable and efficient for high-dimensional ansätze.

We benchmarked both pulse-based (PB-PiQC) and gate-based (GB-PiQC) variants of the PiQC algorithm against SPSA for ground state preparation of electronic structure problems.
Performance evaluation was conducted across multiple benchmark molecules including \ce{LiH}, \ce{BeH2}, \ce{H2}, \ce{H4}.

For the \ce{LiH} molecule, GB-PiQC and PB-PiQC consistently outperformed SPSA across all bond distances tested.
For \ce{BeH2}, GB-PiQC achieved the best overall performance at all bond distances, while SPSA matched PB-PiQC's performance only at the shortest distances, with significant degradation at large bond distances.

For \ce{H2} and \ce{H4}, SPSA demonstrated superior performance at short bond distances but exhibited diminishing accuracy and reduced robustness as the bond distance increased.
Conversely, both PiQC algorithms consistently outperformed SPSA across the intermediate-to-large bond distance regime while maintaining stable accuracy throughout most of this range.

Our results indicate that when both PiQC and SPSA hyperparameters are tuned at a single bond length, PiQC shows more robustness than SPSA (as implemented in this study) to variations in the target Hamiltonian, and frequently achieves significantly higher accuracy than SPSA.
In terms of worst-case performance, both PiQC algorithms always reached chemical accuracy, whereas SPSA failed to achieve chemical accuracy in at least one instance across all four molecules.

For all molecules considered here, the Hartree-Fock error generally increases with bond distance, indicating that PiQC's superior performance over SPSA is most pronounced in regimes where mean-field approximations fail significantly.

Our results suggest that PiQC represents a promising approach for optimizing variational quantum algorithms, demonstrating both enhanced accuracy and robustness compared to conventional optimization methods.
However, comprehensive evaluation of PiQC's resilience to hardware imperfections and sampling noise remains essential for practical quantum computing implementations.

%==========================================================================

\begin{acknowledgments}
P.N. acknowledge support from the ‘Quantum Inspire—the Dutch
Quantum Computer in the Cloud’ project (NWA.1292.19.194) of the NWA research program
‘Research on Routes by Consortia (ORC)’, which is funded by the Netherlands Organization
for Scientific Research (NWO). This work used the Dutch national e-infrastructure with the support of the SURF Cooperative using grant no. EINF-12076.
\end{acknowledgments}

%==========================================================================

\section{Appendix}
\subsection{Proof of formula~\cref{eq:IS-rule-circuit}}\label{app:IS-rule-circuit}

For a general open-loop control model $u_a$ in the time interval $[0, T]$ such that
\bal{
u_a(t) = \sum_{k=1}^K A_{ak} h_k(t) \quad a=1, \ldots, n_c
}
where $A_{ak}$ are constants, and $\{h_k\}$ is the a set of $K$ time-dependent basis functions, it is shown in~\cite{villanueva2025stochastic} that the importance sampling update rule takes the form
\bal{
A^{(p+1)} = A^{(p)} + C^{(p)} B^{-1}
}
where $A^{(p+1)},\, A^{(p)}$ are matrices with components $A^{(p+1)}_{ak}, A^{(p)}_{ak}$, respectively, and $B$ and $C^{(p)}$ are matrices whose components are given by
\bal{
B_{kk'}=& \int_0^T h_k(t)h_{k'}(t) dt \label{eq:B}\\
C^{(p)}_{ak} =& \mathbb{E}\lpar \omega^{(p)} \int_0^T h_k (t) dW_a(t) \rpar \label{eq:C}
}
with $\omega^{(p)}$ as in~\cref{eq:omega_p}.
Note that $B$ is a fixed matrix that is computed once at the start of the optimization.

To prove formula~\cref{eq:IS-rule-circuit} we must specify the control model for the dynamics given by the SSE~\cref{eq:continous_dynamic}.
We define the control model as
\bal{
    u_q(t) = \sum_{m, \ell} \theta^{(\ell)}_{q,m} g^{(\ell)}_m(t) \quad q = 1, \ldots, n
}
where $\theta^{(\ell)}_{q,m}$ are the same as in~\cref{eq:continous_dynamic} with set of basis functions corresponding to $\{ g^{(\ell)}_m \}$.
By replacing this definition in formulas~\cref{eq:B}, ~\cref{eq:C} and taking into account that each time sub-interval has length one, we obtain formula~\cref{eq:IS-rule-circuit}.

\subsection{SPSA Hyperparameters}
\label{app:spsa_hyperparams}
The key hyperparameters in SPSA are the learning rate \( a \) and the perturbation size \( c \). We selected candidate values for \( a \) on a logarithmic scale: \( a \in \{10^{-1}, 10^{-2}, 10^{-3}, \ldots\} \). For each value of \( a \), the candidate values for \( c \) were chosen to be smaller by a factor of 5, 10, or 20, i.e., \( c \in \{a/5, a/10, a/20\} \). This grid of \((a, c)\) pairs was evaluated using multiple random seeds and the median value was chosen in order to account for the stochastic nature of the optimizer. The best combination was selected based on the lowest final energy after a fixed number of $\QEval$. 

Table~\ref{tab:spsa_hyperparams} summarizes the selected \((a, c)\) values for each molecule used in our experiments.

\begin{table}[h]
\centering
\caption{Selected SPSA hyperparameters for each molecule.}
\label{tab:spsa_hyperparams}
\begin{tabular}{lcc}
\toprule
\textbf{Molecule} & \textbf{Learning Rate \(a\)} & \textbf{Perturbation \(c\)} \\
\midrule
\(\ce{H2}\)     & 0.001 & .00005 \\
\(\ce{LiH}\)     & 0.01 & 0.0005 \\
\(\ce{BeH2}\)   & 0.001 & 0.00005 \\
\(\ce{H4}\)   & 0.001 & 0.00005 \\
\bottomrule
\end{tabular}
\end{table}

Instead of using a fixed perturbation and learning rate, it is common to employ decaying 
sequences of the form \cite{spall1998overview, kandala2017hardware}
\begin{align*}
    a_k &= \frac{a}{(A + k + 1)^{\alpha}}\\
    c_k &= \frac{c}{(k + 1)^{\gamma}},
    \label{eq:spsa_schedule}
\end{align*}
where $a, A, c, \alpha,$ and $\gamma$ are tunable hyperparameters. An initial calibration step (which often requires 50 cost function evaluations) is typically performed before optimization begins \cite{belaloui2025groundstate, kandala2017hardware}. In \qiskit, the default values are $c=0.2$, $\alpha=0.602$, $\gamma=0.101$, and $A=0$. 

The reason that we chose to proceed with fixed tuned learning rate and perturbation ($a, c$), instead of the default version of decaying learning rate and perturbation ($a_k, c_k$) in \qiskit with calibration, is that with these fixed tuned values we got lower errors and better performance as shown in Fig.~\ref{fig:SPSA_hyperparameter_effect}. The fixed learning rate and perturbation lead to better results than the decaying one for the $H_{\text{target}}$ cases considered here, and therefore we chose this better tuned version for a fair comparison with the PiQC.  

\begin{figure}[htbp]
    \centering
    \begin{minipage}{0.48\textwidth}
        \centering
        \includegraphics[width=\linewidth]{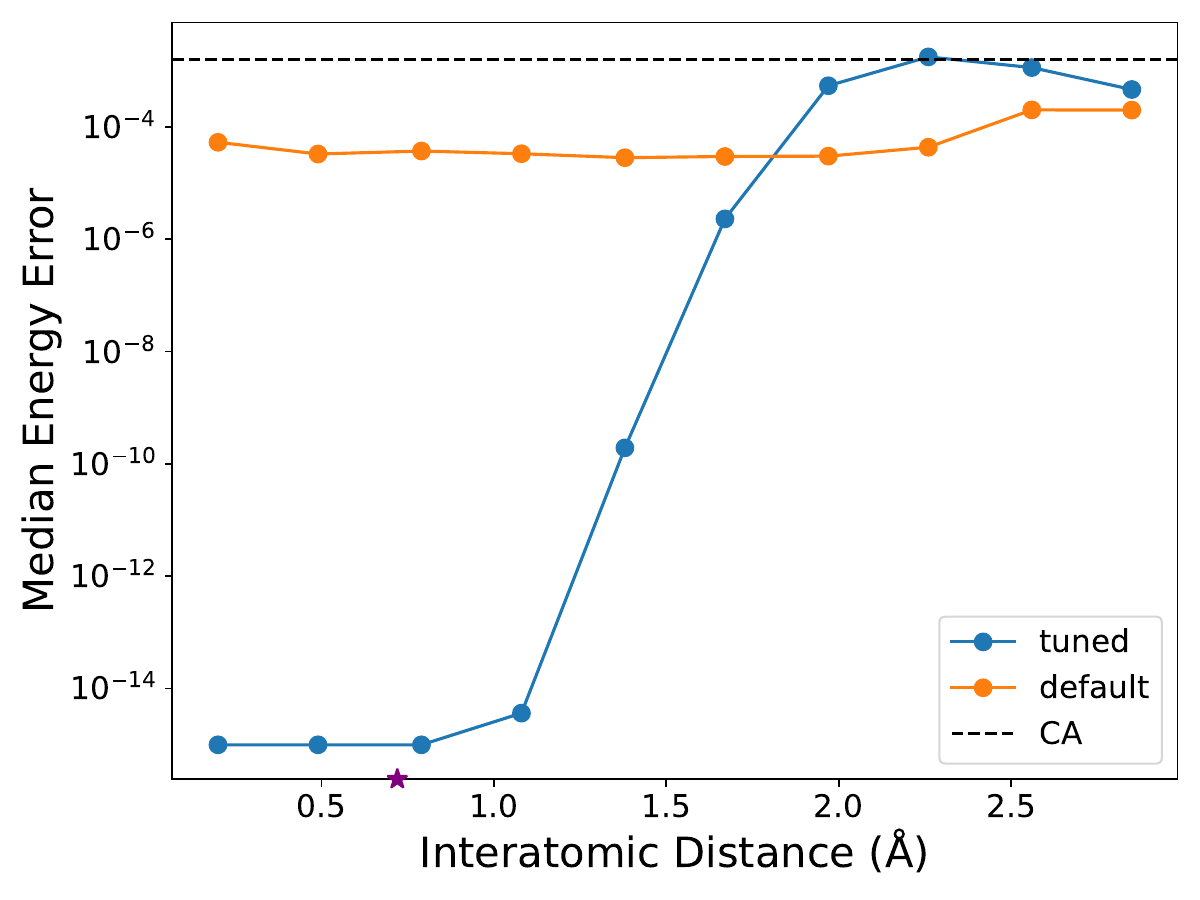}
    \end{minipage}
    \hfill
    \begin{minipage}{0.48\textwidth}
        \centering
        \includegraphics[width=\linewidth]{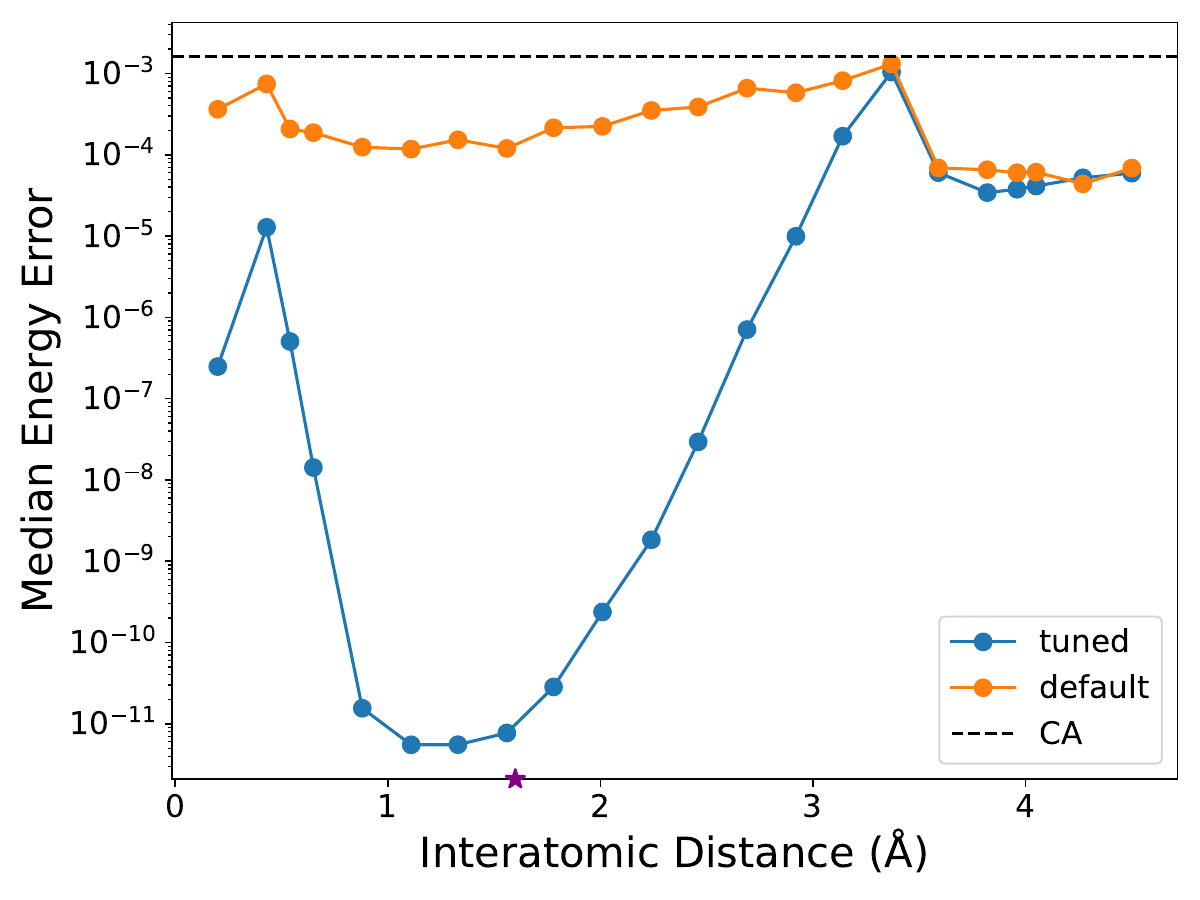}
    \end{minipage}
    \caption{Comparison of SPSA optimizer with different hyperparameter selections. Fixed learning rate and perturbation ($a, c$) after tuning with grid search method is indicated as ``tuned" and ``default" refers to the decaying learning rate and perturbation hyperparameters ($a_k, c_k$) that is used in \qiskit with a calibration step before optimizing. Top figure is for \ce{H2} and the bottom one belongs to \ce{LiH}. The purple asterisk on the x-axis marks the bond distance used for hyperparameter tuning of $(a, c)$ for the fixed learning rate an perturbation case. This figure shows that using fixed $(a, c)$ after tuning around equilibrium bond distance, overall lead to better performance than using the decaying $(a_k, c_k)$ learning rate and perturbation for the molecules considered.} 
    \label{fig:SPSA_hyperparameter_effect}
\end{figure}

%==========================================================================

\subsection{PiQC Hyperparameters}
\label{app:piqc_hyperparams}
In the PiQC experiments, as mentioned in Algorithm~\ref{alg:annealedPiQC}, we used an exponentially decreasing schedule for the noise parameter \(D\) from an initial value \(D_{\text{init}}\) to a final value \(D_{\text{final}}\). For PiQC we only tuned it on the PB-PiQC, and the final tuned schedule was good enough to be used for GB-PiQC as well, therefore a separate tuning on GB-PiQC was not done. 

The main hyperparameter that was searched over was the number of AIS steps $n_s$ at each $D$ value. Candidate $n_s$ values were \{50, 100, 200, 400, 800, 1600\}, and for each chosen value, the number of steps $n_D$ was determined as \(I_{\text{PiQC}} / n_s\), where \(I_{\text{PiQC}}\) denotes the total number of iterations. 

The initial value of \(D_{\text{init}} = 2.5 \times 10^{-5}\) was fixed across all molecules, and this was not tuned. The final value \(D_{\text{final}}\) was selected heuristically based on earlier experiments; for the smaller molecules (\ce{H2} and \ce{LiH}) it was set to \(D_{\text{final}} = 5 \times 10^{-16}\) and for the bigger ones (\ce{H2} and \ce{Lih}) \(D_{\text{final}} = 5 \times 10^{-13}\) was used. Each optimization run used 10 stochastic quantum trajectories ($\ntraj=10$), and the control parameters were initialized to zero.

Table~\ref{tab:PiQC_hyperparams} lists the selected PiQC noise schedule hyperparameters for each molecule.

\begin{table}[h]
\centering
\caption{Selected PiQC hyperparameters per molecule. \(D_{\text{init}} = 2.5 \times 10^{-5}\) is fixed.}
\label{tab:PiQC_hyperparams}
\begin{tabular}{lccc}
\toprule
\textbf{Molecule} & \textbf{Step Size $n_s$} & \textbf{Steps $n_D$}  & \(\boldsymbol{D_{\text{final}}}\) \\
\midrule
$\ce{H2}$     & 100 & 64 & \(5*10^{-16}\) \\
$\ce{LiH}$    & 400 & 64 & \(5*10^{-16}\) \\
$\ce{BeH2}$   & 800 & 32 & \(5*10^{-13}\) \\
$\ce{H4}$     & 800 & 32 & \(5*10^{-13}\) \\
\bottomrule
\end{tabular}
\end{table}

\bibliographystyle{quantum}
\bibliography{reference}

\begin{thebibliography}{10}

\bibitem{villanueva2025stochastic}
Aar{\'o}n Villanueva and Hilbert Kappen.
\newblock ``Stochastic optimal control of open quantum systems''~(2025).
\newblock  \href{http://arxiv.org/abs/2410.18635}{arXiv:2410.18635}.

\bibitem{tilly2022variational}
Jules Tilly, Hongxiang Chen, Shuxiang Cao, Dario Picozzi, Kanav Setia, Ying Li, Edward Grant, Leonard Wossnig, Ivan Rungger, George~H. Booth, and Jonathan Tennyson.
\newblock ``The {{Variational Quantum Eigensolver}}: {{A}} review of methods and best practices''.
\newblock \href{https://dx.doi.org/10.1016/j.physrep.2022.08.003}{Physics Reports {\bf 986}, 1--128}~(2022).

\bibitem{grimsley2019adaptive}
Harper~R. Grimsley, Sophia~E. Economou, Edwin Barnes, and Nicholas~J. Mayhall.
\newblock ``An adaptive variational algorithm for exact molecular simulations on a quantum computer''.
\newblock \href{https://dx.doi.org/10.1038/s41467-019-10988-2}{Nature Communications {\bf 10}, 3007}~(2019).

\bibitem{peruzzo2014variational}
Alberto Peruzzo, Jarrod McClean, Peter Shadbolt, Man-Hong Yung, Xiao-Qi Zhou, Peter~J. Love, Al{\'a}n {Aspuru-Guzik}, and Jeremy~L. O'Brien.
\newblock ``A variational eigenvalue solver on a photonic quantum processor''.
\newblock \href{https://dx.doi.org/10.1038/ncomms5213}{Nature Communications {\bf 5}, 4213}~(2014).

\bibitem{cerezo2021variational}
M.~Cerezo, Andrew Arrasmith, Ryan Babbush, Simon~C. Benjamin, Suguru Endo, Keisuke Fujii, Jarrod~R. McClean, Kosuke Mitarai, Xiao Yuan, Lukasz Cincio, and Patrick~J. Coles.
\newblock ``Variational quantum algorithms''.
\newblock \href{https://dx.doi.org/10.1038/s42254-021-00348-9}{Nature Reviews Physics {\bf 3}, 625--644}~(2021).

\bibitem{magann2021pulses}
Alicia~B. Magann, Christian Arenz, Matthew~D. Grace, Tak-San Ho, Robert~L. Kosut, Jarrod~R. McClean, Herschel~A. Rabitz, and Mohan Sarovar.
\newblock ``From {P}ulses to {C}ircuits and {B}ack {A}gain: {A} quantum optimal control perspective on variational quantum algorithms''.
\newblock \href{https://dx.doi.org/10.1103/PRXQuantum.2.010101}{PRX Quantum {\bf 2}, 010101}~(2021).

\bibitem{dekeijzer2023pulse}
Robert {de Keijzer}, Oliver Tse, and Servaas Kokkelmans.
\newblock ``Pulse based {V}ariational {Q}uantum {O}ptimal {C}ontrol for hybrid quantum computing''.
\newblock \href{https://dx.doi.org/10.22331/q-2023-01-26-908}{Quantum {\bf 7}, 908}~(2023).
\newblock  \href{http://arxiv.org/abs/2202.08908}{arXiv:2202.08908}.

\bibitem{meitei2021gatefree}
Oinam~Romesh Meitei, Bryan~T. Gard, George~S. Barron, David~P. Pappas, Sophia~E. Economou, Edwin Barnes, and Nicholas~J. Mayhall.
\newblock ``Gate-free state preparation for fast variational quantum eigensolver simulations''.
\newblock \href{https://dx.doi.org/10.1038/s41534-021-00493-0}{npj Quantum Information {\bf 7}, 1--11}~(2021).

\bibitem{foldager2022noise}
Jonathan Foldager, Arthur Pesah, and Lars~Kai Hansen.
\newblock ``Noise-assisted variational quantum thermalization''.
\newblock \href{https://dx.doi.org/10.1038/s41598-022-07296-z}{Scientific Reports {\bf 12}, 3862}~(2022).

\bibitem{ito2023santaqlaus}
Kosuke Ito and Keisuke Fujii.
\newblock ``{S}anta{Q}laus: {A} resource-efficient method to leverage quantum shot-noise for optimization of variational quantum algorithms''~(2023).
\newblock  \href{http://arxiv.org/abs/2312.15791}{arXiv:2312.15791}.

\bibitem{liu2025stochastic}
Junyu Liu, Frederik Wilde, Antonio~Anna Mele, Xin Jin, Liang Jiang, and Jens Eisert.
\newblock ``Stochastic noise can be helpful for variational quantum algorithms''.
\newblock \href{https://dx.doi.org/10.1103/PhysRevA.111.052441}{Physical Review A {\bf 111}, 052441}~(2025).

\bibitem{levy2017action}
Amikam Levy, E~Torrontegui, and Ronnie Kosloff.
\newblock ``Action-noise-assisted quantum control''.
\newblock \href{https://dx.doi.org/10.1103/PhysRevA.96.033417}{Physical Review A {\bf 96}, 033417}~(2017).
\newblock  \href{http://arxiv.org/abs/1707.06032}{arXiv:1707.06032}.

\bibitem{guimaraes2024optimized}
Jos{\'e}~D Guimar{\~a}es, Antonio Ruiz-Molero, James Lim, Mikhail~I Vasilevskiy, Susana~F Huelga, and Martin~B Plenio.
\newblock ``Optimized noise-assisted simulation of the {L}indblad equation with time-dependent coefficients on a noisy quantum processor''.
\newblock \href{https://dx.doi.org/10.1103/PhysRevA.109.052224}{Physical Review A {\bf 109}, 052224}~(2024).

\bibitem{sveistrys2025speeding}
Mykolas Sveistrys, Josias Langbehn, Rapha{\"e}l Menu, Steve Campbell, Giovanna Morigi, and Christiane~P Koch.
\newblock ``Speeding up quantum annealing with engineered dephasing''.
\newblock \href{https://dx.doi.org/10.22331/q-2025-05-05-1731}{Quantum {\bf 9}, 1731}~(2025).

\bibitem{kappen2005linear}
Hilbert~J. Kappen.
\newblock ``Linear {T}heory for {C}ontrol of {N}onlinear {S}tochastic {S}ystems''.
\newblock \href{https://dx.doi.org/10.1103/PhysRevLett.95.200201}{Physical Review Letters {\bf 95}, 200201}~(2005).

\bibitem{thijssen2015path}
Sep Thijssen and H.~J. Kappen.
\newblock ``Path {I}ntegral {C}ontrol and {S}tate-{D}ependent {F}eedback''.
\newblock \href{https://dx.doi.org/10.1103/PhysRevE.91.032104}{Physical Review E {\bf 91}, 032104}~(2015).

\bibitem{breuer2002theory}
Heinz-Peter Breuer and Francesco Petruccione.
\newblock ``The theory of open quantum systems''.
\newblock \href{https://dx.doi.org/10.1093/acprof:oso/9780199213900.001.0001}{OUP Oxford}. ~(2002).

\bibitem{mcclean2018barren}
Jarrod~R. McClean, Sergio Boixo, Vadim~N. Smelyanskiy, Ryan Babbush, and Hartmut Neven.
\newblock ``Barren plateaus in quantum neural network training landscapes''.
\newblock \href{https://dx.doi.org/10.1038/s41467-018-07090-4}{Nature Communications {\bf 9}, 4812}~(2018).

\bibitem{kandala2017hardware}
Abhinav Kandala, Antonio Mezzacapo, Kristan Temme, Maika Takita, Markus Brink, Jerry~M Chow, and Jay~M Gambetta.
\newblock ``Hardware-efficient variational quantum eigensolver for small molecules and quantum magnets''.
\newblock \href{https://dx.doi.org/10.1038/nature23879}{Nature {\bf 549}, 242--246}~(2017).

\bibitem{mitarai2018quantum}
Kosuke Mitarai, Makoto Negoro, Masahiro Kitagawa, and Keisuke Fujii.
\newblock ``Quantum circuit learning''.
\newblock \href{https://dx.doi.org/10.1103/PhysRevA.98.032309}{Physical Review A {\bf 98}, 032309}~(2018).

\bibitem{schuld2019evaluating}
Maria Schuld, Ville Bergholm, Christian Gogolin, Josh Izaac, and Nathan Killoran.
\newblock ``Evaluating analytic gradients on quantum hardware''.
\newblock \href{https://dx.doi.org/10.1103/PhysRevA.99.032331}{Physical Review A {\bf 99}, 032331}~(2019).

\bibitem{spall1987stochastic}
James~C. Spall.
\newblock ``A {S}tochastic {A}pproximation {T}echnique for {G}enerating {M}aximum {L}ikelihood {P}arameter {E}stimates''.
\newblock In 1987 {A}merican {C}ontrol {C}onference.
\newblock \href{https://dx.doi.org/10.23919/ACC.1987.4789489}{Pages 1161--1167}.
\newblock ~(1987).

\bibitem{spall1998implementation}
J.C. Spall.
\newblock ``Implementation of the simultaneous perturbation algorithm for stochastic optimization''.
\newblock \href{https://dx.doi.org/10.1109/7.705889}{IEEE Transactions on Aerospace and Electronic Systems {\bf 34}, 817--823}~(1998).

\bibitem{saffman2010quantum}
Mark Saffman, Thad~G Walker, and Klaus M{\o}lmer.
\newblock ``Quantum information with {R}ydberg atoms''.
\newblock \href{https://dx.doi.org/10.1103/RevModPhys.82.2313}{Reviews of Modern Physics {\bf 82}, 2313--2363}~(2010).

\bibitem{ludmir2024modeling}
Jason~Zev Ludmir, Yuqian Huo, Nicholas~S DiBrita, and Tirthak Patel.
\newblock ``Modeling and simulating {R}ydberg atom quantum computers for hardware-software co-design with {P}achin{Q}o''.
\newblock \href{https://dx.doi.org/10.1145/3700421}{Proceedings of the ACM on Measurement and Analysis of Computing Systems {\bf 8}, 39:1--39:25}~(2024).

\bibitem{weber2017calculation}
Sebastian Weber, Christoph Tresp, Henri Menke, Alban Urvoy, Ofer Firstenberg, Hans~Peter B{\"u}chler, and Sebastian Hofferberth.
\newblock ``Calculation of {R}ydberg interaction potentials''.
\newblock \href{https://dx.doi.org/10.1088/1361-6455/aa743a}{Journal of Physics B: Atomic, Molecular and Optical Physics {\bf 50}, 133001}~(2017).

\bibitem{morgado2021quantum}
M.~Morgado and S.~Whitlock.
\newblock ``Quantum {S}imulation and {C}omputing with {R}ydberg-interacting {Q}ubits''.
\newblock \href{https://dx.doi.org/10.1116/5.0036562}{AVS Quantum Science {\bf 3}, 023501}~(2021).

\bibitem{the_qiskit_nature_developers_and_contrib_2023_7828768}
The Qiskit~Nature developers and contributors.
\newblock ``Qiskit nature 0.6.0''~(2023).

\bibitem{qiskit2024}
Ali Javadi-Abhari, Matthew Treinish, Kevin Krsulich, Christopher~J. Wood, Jake Lishman, Julien Gacon, Simon Martiel, Paul~D. Nation, Lev~S. Bishop, Andrew~W. Cross, Blake~R. Johnson, and Jay~M. Gambetta.
\newblock ``Quantum computing with {Q}iskit''~(2024).
\newblock  \href{http://arxiv.org/abs/2405.08810}{arXiv:2405.08810}.

\bibitem{spall1998overview}
James~C Spall.
\newblock ``An {O}verview of the {S}imultaneous {P}erturbation {M}ethod''.
\newblock Johns Hopkins APL Technical Digest {\bf 19}, 482--492~(1998).
\newblock  url:~\url{https://www.jhuapl.edu/content/techdigest/pdf/V19-N04/19-04-Spall.pdf}.

\bibitem{belaloui2025groundstate}
Nacer~Eddine Belaloui, Abdellah Tounsi, Abdelmouheymen~Rabah Khamadja, Mohamed~Messaoud Louamri, Achour Benslama, David~E. Bernal~Neira, and Mohamed~Taha Rouabah.
\newblock ``Ground-{S}tate {E}nergy {E}stimation on {C}urrent {Q}uantum {H}ardware through the {V}ariational {Q}uantum {E}igensolver: {A} practical study''.
\newblock \href{https://dx.doi.org/10.1021/acs.jctc.4c01657}{Journal of Chemical Theory and Computation {\bf 21}, 6777--6792}~(2025).

\end{thebibliography}

\end{document}